\documentclass[12pt, draftclsnofoot, onecolumn]{IEEEtran}

\usepackage[dvips]{color}
\usepackage{times}
\usepackage{epsfig}
\usepackage{graphicx}
\usepackage{amsmath}
\usepackage{amsthm}
\usepackage{amssymb}
\usepackage{amsxtra}
\usepackage{here}
\usepackage{rawfonts}
\usepackage{times}
\usepackage{url}
\usepackage{cite}
\usepackage{accents}
\usepackage{caption}

\newcommand\ubar[1]{%
  \underaccent{\bar}{#1}}
\theoremstyle{definition}

\newtheorem{theorem}{\bf Theorem}

\newtheorem{proposition}{\bf Proposition}
\newtheorem{lemma}{\bf Lemma}

\newtheorem{definition}{\bf Definition}

\begin{document}
%
\title{Contract-based Incentive Mechanism for LTE over Unlicensed Channels}

\author{{Kenza Hamidouche$^{1}$}, Walid Saad$^{2}$, M\'erouane Debbah$^{1,3}$, My T. Thai$^{4}$, and Zhu Han$^{5}$\vspace*{0em}\\
\authorblockA{\small $^{1}$ CentraleSup\'elec, Universit\'e Paris-Saclay, Gif-sur-Yvette, France, Email: \protect\url{kenza.hamidouche@centralesupelec.fr}\\
$^{2}$Wireless@VT, Bradley Department of Electrical and Computer Engineering, Virginia Tech, Blacksburg, VA, USA, Email: \protect\url{walids@vt.edu} \\
$^{3}$Mathematical and Algorithmic Sciences Lab, Huawei France R\&D, France, Email: \protect\url{merouane.debbah@huawei.com}\\
$^{4}$ Department of Computer and Information Science and Engineering, University of Florida, Gainesville, FL, USA, Email: \protect\url{mythai@cise.ufl.edu}\\
$^{5}$Electrical and Computer Engineering Department, University of Houston, USA, Email: \protect\url{zhan2@uh.edu}
}\vspace*{0em}
  }

\maketitle
\vspace{-2cm}
\begin{abstract}
In this paper, a novel economic approach, based on the framework of contract theory, is proposed for providing incentives for LTE over unlicensed channels (LTE-U) in cellular-networks. In this model, a mobile network operator (MNO) designs and offers a set of contracts to the users to motivate them to accept being served over the unlicensed bands. A practical model in which the information about the quality-of-service (QoS) required by every user is not known to the MNO and other users, is considered. For this contractual model, the closed-form expression of the price charged by the MNO for every user is derived and the problem of spectrum allocation is formulated as a matching game with incomplete information. For the matching problem, a distributed algorithm is proposed to assign the users to the licensed and unlicensed spectra. Simulation results show that the proposed pricing mechanism can increase the fraction of users that achieve their QoS requirements by up to 45\% compared to classical algorithms that do not account for users requirements. Moreover, the performance of the proposed algorithm in the case of incomplete information is shown to approach the performance of the same mechanism with complete information.
\end{abstract}
\newpage
\section{Introduction}
Mobile video traffic is expected to represent 78\% of the world's mobile data traffic by 2021 \cite{cisco}. This traffic includes video conferencing, video on demand, and streaming videos which are bandwidth-intensive applications that are rapidly straining the capacity of wireless cellular systems. To support all this traffic and ensure the required application-specific quality-of-service (QoS), mobile network operators (MNOs) have started exploiting the possibility of leveraging the somewhat under-used unlicensed spectrum. This is done by enabling the base stations to offload part of their traffic over the WiFi network using the so-called LTE over unlicensed bands (LTE-U) technology \cite{zhanglte,zhang2015coexistence,liu2014small}. However, to successfully deploy LTE-U, one must ensure a fair usage of the unlicensed bands by developing new coexistence frameworks \cite{zhanglte,zhang2015coexistence,liu2014small}. Moreover, over the unlicensed bands, there is no QoS guarantee and the MNOs must incite the users to accept the best effort service based on their traffic type \cite{zhanglte}.

In this regard, several works \cite{GuanINFOCOM16,khairy2017hybrid,bairagi2018multi,bennis2013cellular,liu2014small, chen2016echo,gu2015exploiting,elsherif2015resource,kang2014mobile,chen2016impact,challita2017proactive} have addressed the problem of coexistence over the unlicensed channels so that the LTE traffic served over unlicensed spectrum does not degrade the performance of the WiFi users (WUs). 
The authors in \cite{GuanINFOCOM16} proposed a mechanism that jointly determines a dynamic channel selection, carrier aggregation and fractional spectrum access for LTE-U systems. In \cite{bennis2013cellular}, the authors proposed a cross-system learning algorithm that allows the LTE-U small base stations (SBSs) to autonomously select the unlicensed channels over which they transmit and determine their transmit power. The authors in \cite{liu2014small} formulated the load balancing between licensed and unlicensed bands as an optimization problem to maximize the overall users' rate. In the model of \cite{liu2014small}, the optimal time usage of the unlicensed channels by cellular base stations (BSs) is determined in a centralized way based on the global traffic. The work in\cite{chen2016echo} formulated the unlicensed spectrum allocation problem with uplink-downlink decoupling as a noncooperative game in which the BSs are the players that select the unlicensed channels over which they serve their users. The goal of the BSs is to optimize the uplink and downlink sum-rate while balancing the licensed and unlicensed spectra between the users. The authors in \cite{gu2015exploiting} formulated the unlicensed spectrum allocation problem as a student-project matching problem with externalities. In the model of \cite{gu2015exploiting}, a swap-matching algorithm is proposed in which the BSs propose to the users a list of unlicensed bands over which they can be served.  In \cite{elsherif2015resource}, an optimization problem is formulated to jointly allocate licensed and unlicensed bands while maximizing the users' sum-rate and ensuring fairness among these users.  

 The work in \cite{kang2014mobile} proposed a distributed traffic offloading scheme for LTE-U scenarios with a single base station. The authors in \cite{chen2016impact} proposed an economic model for the allocation of licensed band to the LTE-U BSs that can serve mobile and fixed users over unlicensed channels for free. In \cite{challita2017proactive}, the authors proposed a dynamic coexistence mechanism based on deep learning enabling multiple small base stations to select the unlicensed channels and the time they use every channel while guaranteeing fairness with WiFi users and other LTE-U operators. The work in \cite{bairagi2018multi} used multi-game theory to propose a novel coexistence framework between LTE and WiFi users. In particular, a cooperative Nash bargaining game (NBG) and a bankruptcy game were formulated to allocate unlicensed resources. The authors in \cite{xiao2018optimizing} introduced the concept of right sharing for mobile network operators allowing the operators to share and trade their spectrum access rights to the unlicensed bands. In \cite{garcia2018orla}, the authors proposed an orthogonal listen-before talk coexistence paradigm, a licensed-assisted access (LAA) scheme, that provides a substantial improvement in performance compared to classical listen-before talk mechanism, as it imposes no penalty on existing WiFi networks. The authors in \cite{khairy2017hybrid} proposed a hybrid MAC protocol to jointly optimizing the sleep period and the contention window size in LTE-U, to maximize the network performance in terms of the total network throughput and throughput fairness of LTE-U and WiFi. The work in \cite{gu2017dynamic} proposed a coexistence mechanism based on matching theory and proposed a mechanism that accounts for network dynamics and wad evaluated under two typical user mobility models. Other works \cite{mozaffari2016unmanned,challita2017network} have explored other alternatives to improve the backhaul connectivity via the deployment of unmanned aerial vehicles (UAVs). In \cite{mozaffari2016unmanned}, the authors studied the coexistence problem of device-to-device (D2D) and cellular networks in the presence of a single UAV that provides downlink transmission to support the users. The work in \cite{challita2017network} proposed a backhaul framework in which the UAVs are utilized to link the core network and SBSs when the ground backhaul is either unavailable or limited in capacity. 

Remarkably, none of these works \cite{GuanINFOCOM16,khairy2017hybrid,bairagi2018multi,bennis2013cellular,liu2014small, chen2016echo,gu2015exploiting,elsherif2015resource,kang2014mobile,chen2016impact,challita2017proactive} have addressed the economic aspect of LTE-U and all have strictly focused on the fair coexistence over the unlicensed bands. However, mobile video content that requires high QoS represents most of the traffic that is exchanged by the users over cellular network. Such QoS requirements cannot be guaranteed over the unlicensed bands due the medium access protocol that is used by the WUs and imposed to the BSs. Thus, without a suitable pricing mechanism, introducing LTE-U can create a new concern in which some users with privilege would not be incentivized to use the unlicensed bands while other users might be always offloaded to the unlicensed bands \cite{elec2015fair}. This economic concerns has been also raised in \cite{shih2018unlicensed,yu2017auction,zhang2017multi}. In \cite{yu2017auction}, the authors proposed an auction mechanism, which enables the LTE provider to negotiate with the WiFi access point owners an exclusive or shared access of the unlicensed channels. The authors in \cite{zhang2017multi} formulated the allocation problem of the unlicensed bands as a Stackelberg game in which each operator sets an interference penalty price for each user that causes interference to the WiFi access point (WAP), and the users choose the unlicensed channel and determine the optimal transmit power in the chosen channel. Hence, it is necessary to develop economic models that incite the LTE users to receive their traffic via the unlicensed channels. Moreover, the spectrum allocation mechanism must accounts for the traffic type of the users in terms of QoS requirements since no coexistence framework exists in the literature in which the requirement in terms of QoS of two separate devices is considered \cite{elec2015fair}.

The main contribution of this paper is to introduce a novel incentive mechanism for facilitating the deployment of LTE-U over cellular networks. We consider a model in which an MNO proposes agreements to a number of LTE-U users that are subscribed to its services to incite them to accept being served over the WiFi network. We consider a practical model with \emph{asymmetric and incomplete information} to account for the heterogeneity of the QoS that is required by each user and is unknown to the MNO as well as other users. The proposed approach, based on contract theory \cite{borgers2015introduction,zhang2017survey}, allows an MNO to define a contract for each user in a model with asymmetric and incomplete information, by fixing the price of serving each content and the fraction of the content that is served via the licensed and unlicensed channels. Unlike classical contract-theoretic models \cite{gao2011spectrum,duan2014cooperative,zhang2015contract,hamidouche2016breaking}, we consider a \emph{Bayesian model} with incomplete information in which the private information of a given user as well as its selected contract is not revealed to the other users. The goal of the MNO is to maximize its own reward while ensuring the feasibility of the contract it proposes to the users. For this contract model, we derive the closed-form of the price and prove its optimality and uniqueness. Moreover, we model the licensed and unlicensed spectrum allocation in an LTE-U system as a \emph{priority-based Bayesian matching game}  to determine the amount of traffic to serve over each of the frequency bands. For the formulated game, we propose a distributed algorithm to assign the licensed and unlicensed frequency bands to the users based on their priority and their QoS requirement. Then, we prove that the algorithm is guaranteed to converge to a Bayesian stable matching. Simulation results show that the proposed pricing mechanism can increase the fraction of users that achieve their requirements in terms of QoS with up to 45\% compared to a uniform pricing mechanism. Moreover, the performance of the proposed mechanism for the case of incomplete information is shown to approach the performance of the same mechanism when all the information about users' requirements is available to the MNO.

The rest of the paper is organized as follows. The system model is provided in Section \ref{model}. In Section \ref{contract}, the problem is formulated as a contract theory problem. In Section \ref{matching}, the allocation problem of licensed and unlicensed bands is formulated as a Bayesian matching game. Section \ref{sim} presents the simulation results. Conclusions are drawn in Section \ref{con}.
\section{System Model}
\label{model}
Consider a wireless network composed of a set $\mathcal{S}$ of $S$ LTE-U BSs that can serve the downlink traffic of a set $\mathcal{N}$ of $N$ users over both licensed and unlicensed bands. Since the BSs cannot guarantee any QoS when they serve users over the unlicensed bands, the MNO has to design an economic model that motivates users to accept being served over unlicensed channels depending on their QoS requirement. To define the optimal pricing policy and spectrum allocation mechanism, we use the two frameworks of contract theory and Bayesian matching games. Contract theory is a powerful mathematical tool for creating economic incentives while matching games enable to capture the interconnection between the users that share the same resources and deal with the combinatorial nature of such resource allocation problems. The pricing and spectrum allocation mechanisms are illustrated in Fig. \ref{fig:model}.

\begin{figure}
\centering
\includegraphics[scale=0.58]{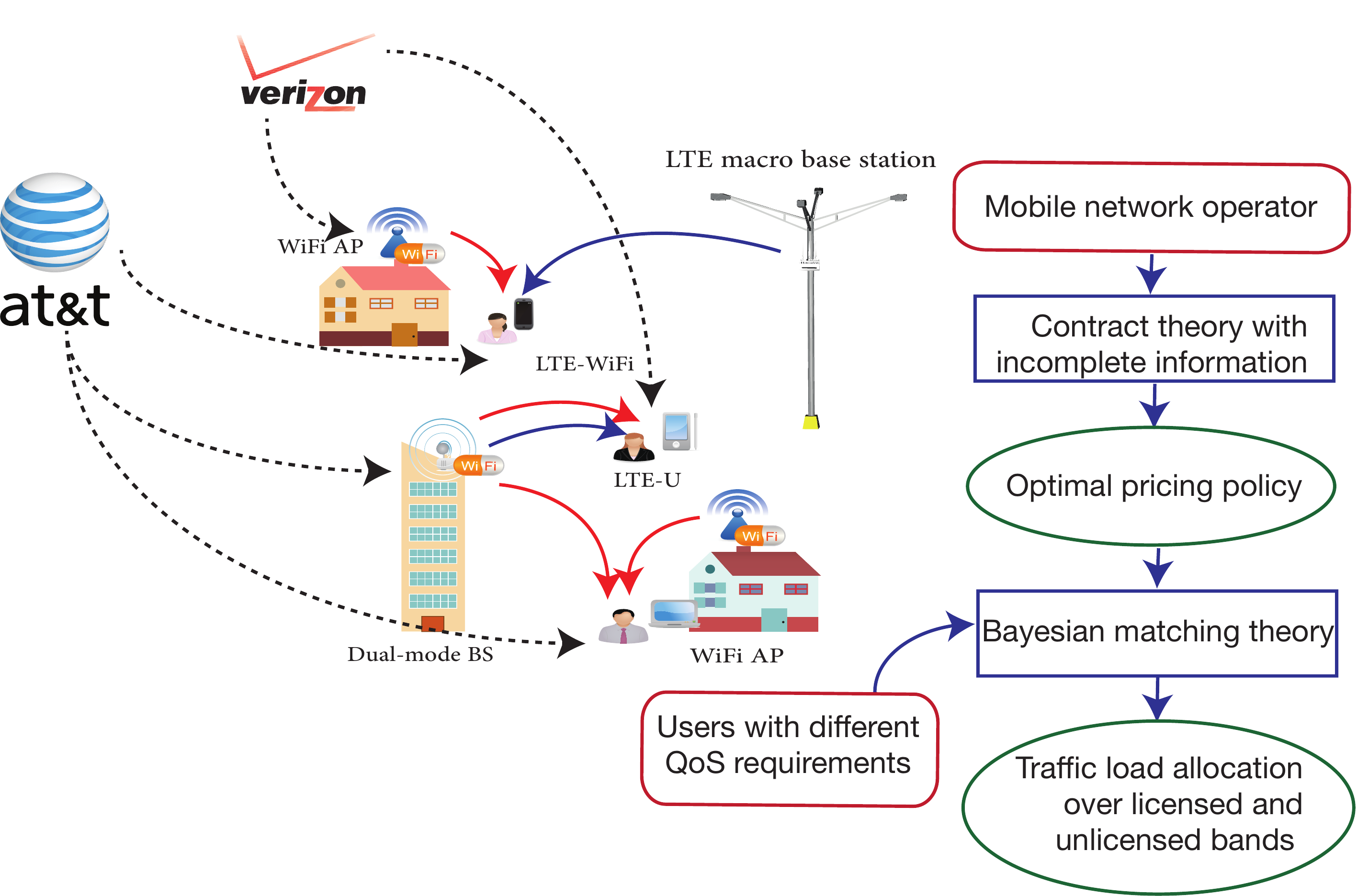}\vspace{-0.1cm}
\caption{Pricing and spectrum allocation in an LTE-U system.}\vspace{-0.3cm}
\label{fig:model}
\end{figure}

In the proposed model, each MNO specifies a performance-reward bundle \emph{contract} $(\alpha_i,\beta_i,\pi_i)$, where $\pi_i$ is the monetary reward that is charged by the MNO to user $i\in\mathcal{N}$, $\alpha_i$ is the fraction of user $i$'s traffic that is served over the licensed bands, and $\beta_i$ is the fraction of traffic that is served via the unlicensed bands. Since users have different QoS requirements depending on their traffic, we introduce a type $\theta_i\in\Theta_i=[\ubar{\theta}_i,\bar{\theta}_i]$ for each user $i$. The type represents the willingness of the user to pay for its traffic. In fact, the higher the required download rate, the more the user would be willing to pay. We sort the users' requirements in terms of QoS in an ascending order and classify them into $K$ types denoted by $\theta_1,...,\theta_K$ with $K\leq C$. Each type represents the type of the user's traffic or the willingness of a user to pay for the service offered by the MNO. This information is private to each user and is not available to the other users. The types are ordered as follows:
\begin{equation*}
\theta_1<...<\theta_k<...<\theta_K,  ~~~k\in \mathcal{K}=\{1,...,K\}.
\end{equation*}
Here, we assume that all of the requests generated by the users are launched by different users.

Instead of offering the same contract to all of the users, the MNO must offer different contracts to different users according to their types. This enables the MNO to account for the heterogeneous QoS requirements of the users and their willingness to pay for the service offered by the MNO. Moreover, the MNO must account for the interdependence between the contract designed for each type that appears in the performance achievable by every user when selecting the contract designed for its type. Such interdependence appears in the achievable rate by each user which depends on the offload policy $\{(\alpha_i,\beta_i)\}_{i\in\mathcal{K}}$ that determines the amount of traffic that is served over both licensed and unlicensed bands for all users' types. The users are free to accept or decline any type of contracts. If a user declines to receive any contract, we assume that a user signs a contract $(\alpha_i,\beta_i,\pi_i) =(0,0,0)$. Next, we define the utilities of the MNO and users in the system.

\subsection{Mobile Network Operator's Utility}
The utility of an MNO corresponds to the difference between the price charged to its users and the cost for serving them which is widely used in the literature as it captures practical considerations \cite{shah2017incentive,hajimirsadeghi2017joint,liu2017design}. The utility of the MNO when serving a user of type $\theta_i$ can be given by:
\begin{equation}
u_{\text{o},i}(\boldsymbol{\theta})=  \pi_i(\theta_i,\boldsymbol{\theta}_{-i})-c_i(\alpha_i(\boldsymbol{\theta}),\theta_i,\boldsymbol{\theta}_{-i}),
\end{equation}
where $\pi_i(\theta_i,\boldsymbol{\theta}_{-i})$ is the price charged by the MNO to a given user of type $i$, $\boldsymbol{\theta}$ is a vector that groups the types of all the users, and $\boldsymbol{\theta}_{-i}$ represents the same vector without the type of user $i$. 

We define the cost $c_i(\alpha(\boldsymbol{\theta}),\theta_i,\boldsymbol{\theta}_{-i})$ as the transmit power needed to serve a user of type $\theta_i$ with a rate $r_{si}$ from its associated BS $s\in\mathcal{S}$ which depends on the rates required by the other users. Thus, the transmit power cost can be given by:
\begin{equation}
c_i\left(\alpha_i(\boldsymbol{\theta}),\theta_i,\boldsymbol{\theta}_{-i}\right)= p_{si}(\alpha_i(\boldsymbol{\theta}))=\frac{\left(e^{\frac{r^{\text{req}}_{i}}{w}}-1\right)\left(\sigma^2_0+\sum_{j\in\mathcal{S}\setminus s}\sum_{k\in\mathcal{U}}|h_{ji}|^2\delta_{jk}p_{ji}\left(\alpha_k(\boldsymbol{\theta})\right)\right)}{|h_{si}|^2},
\end{equation}
where $r^{\text{req}}_{i}$ is the required rate of user $i$, $p_{si}(.)$ is the transmit power from BS $s$ to user $i$, $h_{si}$ is the channel gain, $w$ is the bandwidth, and $\sigma^2_0$ is the noise variance. $\delta_{jk}$ is a binary variable that is equal to $1$ when SBS $j$ is connected to user $k$ and $0$ otherwise. when We note that the transmit power $p_{ji}(\alpha_k(\boldsymbol{\theta}))$ of the other BSs depends on the type of all users as the served user is affected by the interference from all the other BSs. For notational convenience, we use $\alpha_i$, $\alpha_i(\theta_i,\boldsymbol{\theta}_{-i})$ and $\alpha_i(\boldsymbol{\theta})$ interchangeably.
\subsection{Users' Utility}
The goal of a user $i$ of type $\theta_i$ is to maximize its benefit which can be defined as the difference between the achievable rate and the price charged by the MNO:
\begin{equation}
\label{user_utility}
u_i(\alpha_i,\boldsymbol{\theta})=\theta_i v_i\left(\alpha_i(\theta_i,\boldsymbol{\theta}_{-i}),\beta_i(\theta_i,\boldsymbol{\theta}_{-i}),\theta_i\right)-\pi_i(\theta_i,\boldsymbol{\theta}_{-i}),
\end{equation}
where $v_i$ is the valuation of the user. This function can be defined as the time needed for the MNO to serve the user's traffic or equivalently, the achievable rate by the user. Note that $v_i$ is a decreasing function of the types $\boldsymbol{\theta}_{-i}$. In fact, the larger the type of other users, the higher is the traffic load on the licensed bands, and thus the QoS experienced by the user decreases. The valuation function of a user is given by:
\begin{equation}
\label{utility_sbs}
v_i(\theta_i,\boldsymbol{\theta}_{-i})=\eta (r_{si}(\alpha_i(\boldsymbol{\theta}),\beta_i(\boldsymbol{\theta}),\theta_i )-r_{i}^{\text{req}})^2,
\end{equation}
where $\eta$ is a parameter and $r_{si}(\alpha_i(\boldsymbol{\theta}),\beta_i(\boldsymbol{\theta}),\theta_i )$ represents the achievable data rate by user $i$ when served from BS $i$ over both licensed and unlicensed bands. We choose a quadratic function to model the valuation of users to capture the fact that user $i$ is only satisfied by the rate when it attains its target rate. Thus, the valuation function of a user $i$ is maximized at $r_{i}^{\text{req}}$ and decreases when the rate achieved by the user exceeds or is lower than $r_{i}^{\text{req}}$. The achievable rate by user $i$ is given by:
\begin{equation}
\label{rate}
r_{si}(\alpha_i(\boldsymbol{\theta}),\beta_i(\boldsymbol{\theta}),\theta_i )=\mathbb{E}_t\left[w_{si}\log{\Big(1+\frac{p_{si}(\alpha_i(\boldsymbol{\theta}))|h_{si}|^2}{\sigma^2+I(\alpha_i(\boldsymbol{\theta}),\boldsymbol{\theta})}\Big)}\right]+\mathbb{E}_t\left[\kappa_{si}(\beta_i(\boldsymbol{\theta}),\boldsymbol{\theta})\right],
\end{equation}
where $I=\sum_{l\in\mathcal{S}\setminus s}\sum_{k\in\mathcal{U}}{\delta_{lk}p_{li}(\alpha_k(\boldsymbol{\theta}))|h_{li}|^2}$ is the interference experienced by user $i$ from the BSs over the licensed bands, and $\kappa_{si}(\beta_i(\boldsymbol{\theta})),\boldsymbol{\theta})$ is the rate that user $i$ achieves when served by BS $s$ over the unlicensed band and is given by:
\begin{equation}
    \kappa_{si}(\beta_i(\boldsymbol{\theta}),\boldsymbol{\theta})=
    \begin{cases}
w^{\prime}_{si}\log{\Big(1+\frac{p^{\prime}_{si}(\alpha_i,\theta_i)|h^{\prime}_{si}|^2}{\sigma^{\prime 2}+I^{\prime}(\beta_i(\boldsymbol{\theta}),\boldsymbol{\theta})}\Big)}, & \text{if}\ I^{\prime}\leq I_{\text{th}}, \\
      0, & \text{otherwise},
    \end{cases}
\end{equation}
where $w^{\prime}_{si}$ is the bandwidth of the unlicensed channel, $p^{\prime}_{si}$ is the transmit power from BS $s$ to user $i$ over the unlicensed channel, and $I^{\prime}=\sum_{w\in\mathcal{S}\cup\mathcal{W}\setminus s}{p^{\prime}_{wi}|h_{wi}|^2}$ is the interference experienced by user $i$ from all the transmitting BSs and WiFi users in $\mathcal{W}$ over the same unlicensed channel. For the unlicensed channel, we define an interference threshold $I_{\text{th}}$ that is used by each BS to decide whether to transmit or not over the unlicensed channel. When a given BS senses an interference level that exceeds $I_{\text{th}}$, it does not transmit its content which results in a backoff period. Note that the interference experienced by any user depends on the types of all the other users and the amount of traffic that is served over the licensed and unlicensed bands. 

Since users do not have information about other users' types, we assume that the profile type $\boldsymbol{\theta}=[\theta_1,..., \theta_n]$ is drawn from the set $\Theta=\times\Theta_i$ according to a distribution $P(\boldsymbol{\theta})$ with density $p(\boldsymbol{\theta})$ which is common knowledge to all users. The types of users are assumed to be statistically independent such that $\phi(\boldsymbol{\theta})=\prod_{i\in\mathcal{N}}\phi_i(\theta_i)$. Moreover, considering a model with incomplete information, each user's strategy not only depends on its beliefs about $\phi(\boldsymbol{\theta})$, but also on its beliefs about the author users' beliefs about $\phi(\boldsymbol{\theta})$. Given the distribution of the users' types, each user seeks to maximize its expected utility given by:
\begin{equation} 
\label{exp_utility_users}
\bar{u}_i(\theta_i)= \int_{\Theta_{-i}} u_i(\alpha_i,\theta_i,\boldsymbol{\theta}_{-i}).
\end{equation}
The expected valuation function is defined as $\bar{v}_i(\theta_i)=\int_{\Theta_{-i}}v_{i}(\alpha(\boldsymbol{\theta}),\beta(\boldsymbol{\theta}),\theta_i)$ and the expected price is $\bar{\pi}_i(\theta_i)=\int_{\Theta_{-i}}\pi_i(\theta_i,\boldsymbol{\theta_{-i}})$. 
The optimization problem of the MNO can be defined as follows:
\begin{equation}\label{PF}
\begin{aligned}
& \underset{\{\alpha_i(.),\pi_i(.)\}_{\forall i}}{\max}
&& \int_{\Theta}\sum_{i\in\mathcal{N}}\left[\pi_i(\theta_i,\boldsymbol{\theta}_{-i})-c_i(\theta_i,\boldsymbol{\theta}_{-i})\right]dP(\theta), \\
& \text{s. t.}
&& 0\leq \alpha_i(\boldsymbol{\theta})\leq 1.
\end{aligned}
\end{equation}
In this formulation, we do not make any constraint on the participation of the users. Thus, when proposing the contracts resulting from solving (\ref{PF}), users may prefer not to select any of the contracts or select contracts that are not designed for their types. To analyze this economic incentive problem, we propose a solution based on the Nobel-prize winning framework of contract theory for designing feasible contracts \cite{zhang2017survey}. Moreover, we consider a practical case in which users have incomplete knowledge of other users' types and the contracts they select. Thus, the goal of each user is to select the most optimal contract based only on the distribution of the other users' types. 

Next, we use the framework of contract theory to determine the pricing mechanism at the MNO that incite the users to accept being served over unlicensed channels. Then, we introduce Bayesian matching theory to propose a distributed algorithm that enables the users to achieve their requirement in terms of QoS. In other words, the matching algorithm will determine the amount of traffic that should be served over both licensed and unlicensed spectrum and thus determining the values of $\alpha_i$ and $\beta_i$ for all $i\in\mathcal{U}$.

\section{Contracts Design for Unlicensed Spectrum Pricing}
\label{contract}
To ensure that the users have an incentive to collaborate with the MNO by accepting to be served via unlicensed bands, the MNO must design feasible contracts that ensure the participation of the users by at least selecting one of the contracts that the MNO proposes. Moreover, the MNO must motivate each user to select the contract designed for its own type. 

In contrast to classical contract models that are used to model resource allocation problems in wireless networks such as in \cite{gao2011spectrum,duan2014cooperative,zhang2015contract}, we consider a Bayesian model. In addition to information asymmetry that captures the unavailability of certain user types to the MNO, we assume that the type of every user as well as the contract selected by that user is not known to other users. Such information is important in our model due to the interdependence of the users' contracts that appears in the rates achievable by every user. Thus, a user has to determine it most optimal contract based only on the distribution of users types which is specific to our model. In such system, the feasibility conditions of the contracts are defined differently.

\subsection{Feasibility of a Contract}
In a Bayesian model, the MNO must design contracts for the users, that satisfy the two following constraints:
\begin{definition}
\emph{Truthfully implementable Bayesian strategies (TIBS) }: A contract $f(\boldsymbol{\theta})$ is truthfully implementable in Bayesian strategies if there exists a direct mechanism $\Gamma$ such that truth-telling is a Bayesian equilibrium. In other words, it is better for a user to select the contract defined for its type than selecting another contract, i.e.,  $\forall i\in\mathcal{N}$ and $\theta_i\in\Theta_i$,
\begin{equation}\label{IC}
\int_{\Theta_{-i}} u_i(\alpha_i,\theta_i,\boldsymbol{\theta}_{-i})\geq \int_{\Theta_{-i}}u_i(\alpha_i,\hat{\theta}_i,\boldsymbol{\theta}_{-i}),
\end{equation}
for all $\hat{\theta}_i\in\Theta_i$ and $\hat{\theta}_i\neq\theta_i$.
\end{definition}  
 This condition ensures that none of the users can achieve a better performance in the network by changing its type. The second condition ensures that the contract selected by a user should guarantee that the expected utility of the user is nonnegative given the types distribution $\boldsymbol{\Theta}_{-i}$.
\begin{definition}
\emph{Interim Individually Rational Contract (IIR)}: A contract is interim individually rational if:
\begin{equation}\label{IR}
\int_{\Theta_{-i}} u_i(\alpha_i,\theta_i,\boldsymbol{\theta}_{-i})\geq 0, ~~ \forall i\in\mathcal{N}.
\end{equation}
\end{definition}
In a Bayesian model, we can state the following lemmas from the two previous conditions.
\begin{lemma}
For any feasible contract $f(\boldsymbol{\theta})=(\alpha_i(\boldsymbol{\theta}),\beta_i(\boldsymbol{\theta}),\pi_i(\boldsymbol{\theta})$), we have $\bar{\pi}_i(\theta_i)\geq \bar{\pi}_i(\hat{\theta}_i)$ if and only if $\bar{v}_i(\theta_i)\geq\bar{v}_i(\hat{\theta}_i)$ for all $\hat{\theta}_i\in\Theta_i$.
\end{lemma}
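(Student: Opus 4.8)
The plan is to use only the interim incentive-compatibility requirement in the definition of TIBS; the individual-rationality condition \eqref{IR} is not needed here. First I would put the user's interim payoff into separable form by integrating \eqref{user_utility} over $\Theta_{-i}$: for a user of true type $\theta_i$ that signs the contract designed for $\hat\theta_i$, the expected payoff is $\theta_i\bar v_i(\hat\theta_i)-\bar\pi_i(\hat\theta_i)$, since the offload fractions and the price are those of the bundle indexed by $\hat\theta_i$ while the willingness-to-pay multiplier stays the true $\theta_i$ (truth-telling is the special case $\hat\theta_i=\theta_i$). With this notation, \eqref{IC} applied to type $\theta_i$ deviating to $\hat\theta_i$ and, symmetrically, to type $\hat\theta_i$ deviating to $\theta_i$, gives the two inequalities
\[
\theta_i\bar v_i(\theta_i)-\bar\pi_i(\theta_i)\ \ge\ \theta_i\bar v_i(\hat\theta_i)-\bar\pi_i(\hat\theta_i),\qquad
\hat\theta_i\bar v_i(\hat\theta_i)-\bar\pi_i(\hat\theta_i)\ \ge\ \hat\theta_i\bar v_i(\theta_i)-\bar\pi_i(\theta_i).
\]

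Next I would rearrange these into a two-sided bound on the payment gap,
\[
\hat\theta_i\bigl(\bar v_i(\theta_i)-\bar v_i(\hat\theta_i)\bigr)\ \le\ \bar\pi_i(\theta_i)-\bar\pi_i(\hat\theta_i)\ \le\ \theta_i\bigl(\bar v_i(\theta_i)-\bar v_i(\hat\theta_i)\bigr).
\]
Because every type in $\Theta_i=[\ubar{\theta}_i,\bar\theta_i]$ is strictly positive, both directions of the claimed equivalence drop out. If $\bar v_i(\theta_i)\ge\bar v_i(\hat\theta_i)$, the left bound gives $\bar\pi_i(\theta_i)-\bar\pi_i(\hat\theta_i)\ge\hat\theta_i\bigl(\bar v_i(\theta_i)-\bar v_i(\hat\theta_i)\bigr)\ge 0$. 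Conversely, suppose $\bar\pi_i(\theta_i)\ge\bar\pi_i(\hat\theta_i)$ but, for contradiction, $\bar v_i(\theta_i)<\bar v_i(\hat\theta_i)$; then the right bound yields $0\le\bar\pi_i(\theta_i)-\bar\pi_i(\hat\theta_i)\le\theta_i\bigl(\bar v_i(\theta_i)-\bar v_i(\hat\theta_i)\bigr)<0$, a contradiction, so $\bar v_i(\theta_i)\ge\bar v_i(\hat\theta_i)$. Combining the two implications proves the lemma; adding the two rearranged bounds also records the useful byproduct $(\theta_i-\hat\theta_i)\bigl(\bar v_i(\theta_i)-\bar v_i(\hat\theta_i)\bigr)\ge 0$, i.e.\ the expected valuation is nondecreasing in the type.

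The only genuinely delicate step is the first one: one must pin down exactly what a type-$\theta_i$ user receives when it picks the bundle of another type, so that the quantity $\bar v_i(\hat\theta_i)$ appearing on the right-hand side of the first IC inequality is \emph{literally} the same expected valuation as the $\bar v_i(\hat\theta_i)$ on the left-hand side of the second --- otherwise the two inequalities cannot be legitimately subtracted, and this is exactly the place where the Bayesian (interim) nature of the model and the interdependence of the $\alpha_i,\beta_i$ across users must be handled with care. Once the interim utility is in the separable form $\theta_i\bar v_i(\cdot)-\bar\pi_i(\cdot)$, the remainder is the standard two-type single-crossing/monotonicity argument and is essentially mechanical.
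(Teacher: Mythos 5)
Your proof is correct and follows essentially the same route as the paper's Appendix~A: both rest on rearranging the interim (TIBS) incentive constraints in the separable form $\theta_i\bar v_i(\cdot)-\bar\pi_i(\cdot)$, with positivity of the type doing the final step in each direction. Your two-sided bound merely packages the paper's two one-sided rearrangements into a single sandwich inequality (and, incidentally, states the algebra cleanly where the paper's appendix has sign typos), so there is no substantive difference in approach.
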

\begin{proof}
The proof is provided in Appendix A.
\end{proof}
Lemma 1 shows that a user that pays more for its traffic should receive higher QoS or performance level compared to users that pay less, and vice versa. Next lemma shows the performance that should be achieved by the users based on the price they pay.
\begin{lemma}
For any feasible contract $f(\boldsymbol{\theta})=(\alpha_i(\boldsymbol{\theta}),\beta_i(\boldsymbol{\theta}),\pi_i(\boldsymbol{\theta}))$, if $\theta_i\geq \theta_j$, then $\bar{v}_i(\alpha_i,\theta_i)\geq\bar{v}_i(\alpha_i,\theta_j)$.
\end{lemma}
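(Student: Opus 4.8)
The plan is to obtain the monotonicity of $\bar v_i(\cdot)$ in the type directly from the truthful-implementability constraint (\ref{IC}) invoked in both directions, i.e. the classical single-crossing argument for one-dimensional screening. First I would fix a user $i$ and two admissible types $\theta_i\geq\theta_j$; since all types are drawn from the common set $\{\theta_1,\dots,\theta_K\}$, each of $\theta_i,\theta_j$ is a legitimate report, so (\ref{IC}) may be instantiated twice: once with true type $\theta_i$ and misreport $\hat\theta_i=\theta_j$, and once with true type $\theta_j$ and misreport $\hat\theta_i=\theta_i$.

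Next I would substitute the user utility (\ref{user_utility}) into each inequality and integrate over $\Theta_{-i}$, using $\bar v_i(\theta_i)=\int_{\Theta_{-i}}v_i(\alpha_i(\boldsymbol{\theta}),\beta_i(\boldsymbol{\theta}),\theta_i)$ and $\bar\pi_i(\theta_i)=\int_{\Theta_{-i}}\pi_i(\theta_i,\boldsymbol{\theta}_{-i})$. The first instance reads $\theta_i\bar v_i(\theta_i)-\bar\pi_i(\theta_i)\geq\theta_i\bar v_i(\theta_j)-\bar\pi_i(\theta_j)$, i.e. $\theta_i\bigl(\bar v_i(\theta_i)-\bar v_i(\theta_j)\bigr)\geq\bar\pi_i(\theta_i)-\bar\pi_i(\theta_j)$; the second reads $\theta_j\bar v_i(\theta_j)-\bar\pi_i(\theta_j)\geq\theta_j\bar v_i(\theta_i)-\bar\pi_i(\theta_i)$, i.e. $\bar\pi_i(\theta_i)-\bar\pi_i(\theta_j)\geq\theta_j\bigl(\bar v_i(\theta_i)-\bar v_i(\theta_j)\bigr)$. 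Chaining the two gives $(\theta_i-\theta_j)\bigl(\bar v_i(\theta_i)-\bar v_i(\theta_j)\bigr)\geq 0$, and since $\theta_i\geq\theta_j$ this forces $\bar v_i(\theta_i)\geq\bar v_i(\theta_j)$, which is exactly the claim; note the same chain also yields $\bar\pi_i(\theta_i)\geq\bar\pi_i(\theta_j)$, consistent with Lemma 1.

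The step that needs care — and the one I expect to be the main obstacle — is the passage from (\ref{IC}) to the two scalar inequalities, because $v_i$ depends on the realized type $\theta_i$ both through the selected allocation $(\alpha_i,\beta_i)$ and, a priori, directly. For the factoring $(\theta_i-\theta_j)\bigl(\bar v_i(\theta_i)-\bar v_i(\theta_j)\bigr)$ to be valid, the term appearing when a type-$\theta_i$ user takes the contract intended for $\theta_j$ must be $\theta_i\,\bar v_i(\theta_j)$, i.e. the marginal taste $\theta_i$ multiplying the expected base valuation of the $\theta_j$-allocation. This is precisely the single-crossing/separability structure implicit in (\ref{user_utility}); I would make it explicit (treating $v_i$ as a function of the allocation with $\theta_i$ entering only as the multiplicative taste parameter, consistent with the convention $\alpha_i\equiv\alpha_i(\theta_i,\boldsymbol{\theta}_{-i})$ already adopted), after which the remainder is the routine rearrangement above. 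No boundary issues arise, since every $\theta_j$ in the common type set is an admissible report in (\ref{IC}).
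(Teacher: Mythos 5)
Your proof is correct and follows essentially the same route as the paper: you instantiate the TIBS constraint (\ref{IC}) in both directions for the pair of types, pass to the expected quantities $\bar v_i,\bar\pi_i$, and add the two inequalities so the price terms cancel and $(\theta_i-\theta_j)\bigl(\bar v_i(\theta_i)-\bar v_i(\theta_j)\bigr)\geq 0$ yields the claim. Your explicit remark on the multiplicative (single-crossing) role of $\theta_i$ in (\ref{user_utility}) is a fair clarification of a step the paper leaves implicit, but it is not a different argument.
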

\begin{proof}
We prove this lemma using the TIBS constraint. For a user $i$ of type $\theta_i$ and user $j$ of type $\theta_j$, we have:
\begin{equation}
\theta_i \bar{v}_i(\alpha_i,\theta_i)-\bar{\pi}_i(\theta_i)\geq \theta_i \bar{v}_i(\alpha_i,\theta_j)-\bar{\pi}_j(\theta_j),
\end{equation}
and,
\begin{equation}
\theta_j \bar{v}_i(\alpha_i,\theta_j)-\bar{\pi}_j(\theta_j)\geq \theta_j \bar{v}_i(\alpha_i,\theta_i)-\bar{\pi}_i(\theta_i).
\end{equation}
By adding the two inequalities, we get:
\begin{equation}
\theta_i \bar{v}_i(\alpha_i,\theta_i)+\theta_j \bar{v}_i(\alpha_i,\theta_j)\geq \theta_i \bar{v}_i(\alpha_i,\theta_j)+\theta_j \bar{v}_i(\alpha_i,\theta_i),
\end{equation}
which is equivalent to:
\begin{equation}
(\theta_i -\theta_j)\bar{v}_i(\alpha,\theta_i)\geq (\theta_i-\theta_j) \bar{v}_i(\alpha,\theta_j).
\end{equation}

\end{proof}
Lemma 2 shows that a user of a higher type should receive a higher valuation. Note that the highest expected performance can be achieved only if its traffic is served through the licensed bands. Thus, when the application of the user requires a high data rate, the fraction of served traffic over licensed spectrum should be larger.

From Lemma 1 and Lemma 2, we conclude that for a feasible contract, all contracts satisfy:
\begin{align}
0\leq \bar{\pi}_i(\theta_{1})\leq\bar{\pi}_i(\theta_{2})\leq...\leq\bar{\pi}_i(\theta_{K}),\\ 0\leq\bar{v}_i(\alpha_i,\theta_{1})\leq\bar{v}_i(\alpha_i,\theta_{2})\leq...\leq \bar{v}_i(\alpha_i,\theta_{K}).
\end{align}

\subsection{Incentive Mechanism Design}
At the MNO, the optimization problem can be formulated as follows:
\begin{equation}
\label{problem}
\begin{aligned}
& \underset{\{\alpha_i(\boldsymbol{\theta}),\boldsymbol{\pi}_i(\boldsymbol{\theta})\}_{\forall i}}{\text{max}}
& &\int_{\Theta}\sum_{i\in\mathcal{N}}\left[\pi_i(\theta_i,\boldsymbol{\theta}_{-i})-c_i(\theta_i,\boldsymbol{\theta}_{-i})\right]dP(\theta),  \\
& \text{s.t. }
& & (\ref{IC}), \forall i\in\mathcal{N}, \\
&&& (\ref{IR}), \forall i\in\mathcal{N},\\
&&& 0\leq\alpha_i(\boldsymbol{\theta})\leq 1,\forall i\in\mathcal{N},
\end{aligned}
\end{equation}
where the first two constraints represent the TIBS and IIR conditions that ensure the feasibility of a contract. In addition to the non-convexity of the formulated optimization problem, there is a large number of constraints that increase the complexity of the problem. To solve the problem (\ref{problem}), we first state the following equivalence result.
\begin{theorem}
\label{theorem}
A contract is both TIBS and IIR if and only if $\forall i\in\mathcal{N}$,
\begin{enumerate}
\item  $\bar{v}_i(\alpha_i,\theta_i)$ is nondecreasing,
\item $\bar{u}_i(\theta_i)=\bar{u}_i(\ubar{\theta}_i)+\int_{\ubar{\theta}_i}^{\theta_i}{\bar{v}_i(\alpha,\theta_i)(\eta)d\eta},
$ for all $\theta_i$, and
\item $\bar{u}_i(\ubar{\theta}_i)\geq 0$.
\end{enumerate}
\end{theorem}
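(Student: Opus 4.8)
This is the classic Myerson-style characterization of implementable mechanisms (here adapted to a Bayesian model where the user's type enters its own valuation through the multiplicative term $\theta_i \bar v_i$), so I would prove the two directions separately, working throughout with the interim (expected) quantities $\bar u_i$, $\bar v_i$, $\bar\pi_i$ obtained by integrating out $\boldsymbol{\theta}_{-i}$.

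\medskip
\noindent\textbf{Necessity (TIBS + IIR $\Rightarrow$ (1)--(3)).} First note that by definition $\bar u_i(\theta_i) = \theta_i \bar v_i(\alpha_i,\theta_i) - \bar\pi_i(\theta_i)$, and the TIBS constraint (\ref{IC}), after integrating over $\boldsymbol{\theta}_{-i}$, reads $\bar u_i(\theta_i) \ge \theta_i \bar v_i(\alpha_i,\hat\theta_i) - \bar\pi_i(\hat\theta_i)$ for all $\hat\theta_i$. Property (1) is exactly Lemma 2 (monotonicity of $\bar v_i(\alpha_i,\cdot)$), which I may invoke directly. For property (2), I would use the standard envelope argument: define $g(\theta_i,\hat\theta_i) = \theta_i\bar v_i(\alpha_i,\hat\theta_i)-\bar\pi_i(\hat\theta_i)$, so $\bar u_i(\theta_i)=\max_{\hat\theta_i} g(\theta_i,\hat\theta_i) = g(\theta_i,\theta_i)$; since $g$ is affine (hence differentiable) in the first argument with $\partial g/\partial\theta_i = \bar v_i(\alpha_i,\hat\theta_i)$, the envelope theorem gives $\bar u_i'(\theta_i) = \bar v_i(\alpha_i,\theta_i)$ wherever the derivative exists. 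Alternatively, and more robustly without smoothness assumptions, I would combine the two incentive inequalities for types $\theta_i$ and $\theta_i'$ (as in the proof of Lemma 2) to sandwich $\bar u_i(\theta_i)-\bar u_i(\theta_i')$ between $\bar v_i(\alpha_i,\theta_i')(\theta_i-\theta_i')$ and $\bar v_i(\alpha_i,\theta_i)(\theta_i-\theta_i')$; this shows $\bar u_i$ is absolutely continuous (Lipschitz on compact type sets, since $\bar v_i$ is bounded) and monotone, hence differentiable a.e.\ with $\bar u_i' = \bar v_i(\alpha_i,\cdot)$ a.e., and integrating from $\ubar\theta_i$ yields the stated formula. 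Property (3) is immediate from IIR (\ref{IR}) evaluated at $\theta_i=\ubar\theta_i$: $\bar u_i(\ubar\theta_i)\ge 0$.

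\medskip
\noindent\textbf{Sufficiency ((1)--(3) $\Rightarrow$ TIBS + IIR).} Given the integral representation in (2), for any $\theta_i,\hat\theta_i$ I can write
\begin{equation*}
\bar u_i(\theta_i) - \big[\theta_i\bar v_i(\alpha_i,\hat\theta_i)-\bar\pi_i(\hat\theta_i)\big]
= \bar u_i(\theta_i) - \bar u_i(\hat\theta_i) - (\theta_i-\hat\theta_i)\bar v_i(\alpha_i,\hat\theta_i)
= \int_{\hat\theta_i}^{\theta_i}\big[\bar v_i(\alpha_i,\eta)-\bar v_i(\alpha_i,\hat\theta_i)\big]\,d\eta,
\end{equation*}
using $\theta_i\bar v_i(\alpha_i,\hat\theta_i)-\bar\pi_i(\hat\theta_i) = \bar u_i(\hat\theta_i)+(\theta_i-\hat\theta_i)\bar v_i(\alpha_i,\hat\theta_i)$ from the definition of $\bar u_i$. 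By property (1), $\bar v_i(\alpha_i,\cdot)$ is nondecreasing, so the integrand is $\ge 0$ when $\eta\ge\hat\theta_i$ and $\le 0$ when $\eta\le\hat\theta_i$; in either case the integral is nonnegative, which is exactly the (interim) TIBS inequality. For IIR, property (2) gives $\bar u_i(\theta_i) = \bar u_i(\ubar\theta_i) + \int_{\ubar\theta_i}^{\theta_i}\bar v_i(\alpha_i,\eta)\,d\eta \ge 0$, since $\bar u_i(\ubar\theta_i)\ge 0$ by property (3) and $\bar v_i\ge 0$ (established in the excerpt just before the theorem: $0\le\bar v_i(\alpha_i,\theta_1)\le\cdots$, and the valuation is a square hence nonnegative).

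\medskip
\noindent\textbf{Anticipated obstacle.} The delicate point is the envelope step in the necessity direction: rigorously justifying that $\bar u_i$ is absolutely continuous with $\bar u_i'=\bar v_i(\alpha_i,\cdot)$ a.e.\ requires either an explicit smoothness/continuity hypothesis on $\bar v_i$ or the two-sided sandwich argument sketched above, together with boundedness of $\bar v_i$ to get the Lipschitz estimate. I expect the paper to proceed somewhat informally here (treating $\bar u_i$ as differentiable), but the sandwich argument makes the claim watertight. The sufficiency direction, by contrast, is essentially a one-line consequence of monotonicity once the integral representation is in hand, so the algebraic identity relating $\bar u_i$, $\bar v_i$ and $\bar\pi_i$ is the only thing to keep straight there.
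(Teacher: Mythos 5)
Your proposal is correct and follows essentially the same route as the paper's Appendix B: necessity via the two interim incentive inequalities (yielding monotonicity and, in the limit $\hat{\theta}_i\to\theta_i$, $\bar{u}_i'=\bar{v}_i$, which is then integrated), and sufficiency via the integral representation combined with monotonicity of $\bar{v}_i(\alpha_i,\cdot)$. The only minor differences are cosmetic: you obtain IIR directly from (2), (3) and $\bar{v}_i\ge 0$ where the paper argues by contradiction, and you supply the absolute-continuity/sandwich justification for the envelope step, which the paper indeed treats informally exactly as you anticipated.
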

\begin{proof}
The proof is provided in Appendix B.
\end{proof}
Now, we solve (\ref{problem}) by first writing the price as a function of the amount of traffic served through the licensed and unlicensed channels and give the following result.
\begin{proposition}
\label{prop}
Let $\{(\alpha^*_i,\pi^*_i)\}_{\forall i\in\mathcal{N}}$ be a the optimal contract with fixed time allocations $\{\alpha^*_i, \forall i\}$. The optimal price for every type is given by:
\begin{flalign}\label{price}
&\bar{\pi}^*_1=\theta_1\bar{v}_1(\alpha_i^*,\theta_i),\nonumber\\
&\bar{\pi}^*_i=\theta_i\bar{v}_i(\alpha_i^*,\theta_i)+\int_{\underline{\theta}_i}^{\theta_i}{\bar{v}_i(\alpha_i^*,\eta)d\eta}, \forall i\in\mathcal{N}.
\end{flalign}
\end{proposition}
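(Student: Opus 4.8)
The plan is to derive the closed-form price in (\ref{price}) directly from the characterization of feasibility given in Theorem \ref{theorem}, by exploiting the structure of the MNO's objective in (\ref{problem}) once the time allocations $\{\alpha_i^*\}$ are held fixed. The starting observation is that, with $\alpha_i^*$ fixed, the expected valuation $\bar v_i(\alpha_i^*,\theta_i)$ is a fixed function of $\theta_i$, and the only remaining decision variable in the feasibility conditions is the price schedule $\bar\pi_i(\cdot)$; condition (2) of Theorem \ref{theorem} pins down $\bar u_i(\theta_i)$ up to the constant $\bar u_i(\ubar\theta_i)$, and since $\bar u_i(\theta_i)=\theta_i\bar v_i(\alpha_i^*,\theta_i)-\bar\pi_i(\theta_i)$ by (\ref{user_utility})–(\ref{exp_utility_users}), this immediately translates into a formula for $\bar\pi_i(\theta_i)$ in terms of $\bar u_i(\ubar\theta_i)$ and the integral term.

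First I would write, using condition (2) of Theorem \ref{theorem},
\[
\bar\pi_i(\theta_i)=\theta_i\bar v_i(\alpha_i^*,\theta_i)-\bar u_i(\ubar\theta_i)-\int_{\ubar\theta_i}^{\theta_i}\bar v_i(\alpha_i^*,\eta)\,d\eta .
\]
Next I would argue that the MNO's objective in (\ref{problem}) is, for fixed $\{\alpha_i^*\}$ (and hence fixed cost terms $c_i$), strictly increasing in each $\bar\pi_i$; therefore the MNO wants each $\bar\pi_i(\theta_i)$ as large as possible. But $\bar\pi_i$ is increasing in $\bar u_i(\ubar\theta_i)$ only through a negative sign — larger $\bar u_i(\ubar\theta_i)$ means smaller price — so the MNO optimally drives $\bar u_i(\ubar\theta_i)$ down to its lower bound, which by condition (3) is $\bar u_i(\ubar\theta_i)=0$. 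Substituting $\bar u_i(\ubar\theta_i)=0$ yields exactly the second line of (\ref{price}); the first line is the special case $i=1$, $\theta_i=\ubar\theta_1$, where the integral vanishes and $\bar\pi_1^*=\theta_1\bar v_1(\alpha_1^*,\theta_1)$. I would also note that monotonicity condition (1) of Theorem \ref{theorem} is automatically consistent with this price (it is a constraint on $\alpha_i^*$, not on $\bar\pi_i$), so the resulting contract is feasible. Uniqueness follows because conditions (2)–(3) with the binding choice $\bar u_i(\ubar\theta_i)=0$ leave no freedom in $\bar\pi_i(\cdot)$.

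The main obstacle I anticipate is the claim that maximizing the MNO's objective forces $\bar u_i(\ubar\theta_i)=0$ for \emph{every} $i$ simultaneously. In a general Bayesian/interdependent setting one must be careful that lowering one user's rent does not, through the coupling in $\bar v_i$ and the costs $c_i$, force a violation of some other user's IIR or TIBS constraint; here the argument works because, with $\{\alpha_i^*\}$ fixed, the constraint system in Theorem \ref{theorem} decouples across $i$ in the price variables — each $\bar u_i(\ubar\theta_i)$ appears only in user $i$'s own conditions — so the per-user minimization is unconstrained by the other users. I would make this decoupling explicit, then conclude by collecting the per-type formulas into (\ref{price}) and remarking that optimality and uniqueness are immediate consequences of the strict monotonicity of the objective in $\bar\pi_i$ together with the equality constraint (2).
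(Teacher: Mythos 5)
Your proposal is correct in substance and its core is the same as the paper's: both derive the price by substituting the envelope condition (2) of Theorem \ref{theorem} together with $\bar{u}_i(\theta_i)=\theta_i\bar{v}_i(\alpha_i^*,\theta_i)-\bar{\pi}_i(\theta_i)$, and both obtain the closed form by arguing that the lowest-type participation constraint must bind, $\bar{u}_i(\ubar{\theta}_i)=0$, because the MNO's objective is increasing in the prices. Where you genuinely diverge is in how optimality and uniqueness are certified. The paper proves both by contradiction through a chain argument over the ordered types: it supposes a feasible schedule $\{\tilde{\pi}_i\}$ with a larger (respectively, equal) total price, uses the feasibility inequalities to propagate $\overline{\tilde{\pi}}_i>\overline{\pi^*}_i$ down to $\overline{\tilde{\pi}}_1>\theta_1\bar{v}_1(\alpha_1,\theta_1)$, and concludes that IIR for the lowest type is violated. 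You instead get both facts directly: with $\{\alpha_i^*\}$ fixed, the envelope condition pins $\bar{\pi}_i(\cdot)$ down up to the single scalar $\bar{u}_i(\ubar{\theta}_i)$, the per-user constraints decouple in the price variables, and strict monotonicity of the objective in each $\bar{\pi}_i$ forces $\bar{u}_i(\ubar{\theta}_i)=0$, which simultaneously gives optimality and uniqueness. Your route is shorter and better adapted to the continuum-type formulation; the paper's chain argument has the merit of exhibiting explicitly which feasibility inequality fails when one deviates, and of working directly with the discrete ordering $\theta_1<\dots<\theta_K$ used elsewhere in the paper. Your remark about the need to justify the simultaneous binding of all the constraints (the decoupling across users once $\alpha_i^*$ is frozen) is a point the paper passes over silently with ``the constraint must bind,'' so making it explicit is a genuine improvement rather than a gap.

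One caveat: your intermediate identity $\bar{\pi}_i(\theta_i)=\theta_i\bar{v}_i(\alpha_i^*,\theta_i)-\bar{u}_i(\ubar{\theta}_i)-\int_{\ubar{\theta}_i}^{\theta_i}\bar{v}_i(\alpha_i^*,\eta)\,d\eta$ is the algebraically correct consequence of condition (2), and after setting $\bar{u}_i(\ubar{\theta}_i)=0$ it carries a \emph{minus} sign in front of the integral, whereas (\ref{price}) as stated (and the paper's own Appendix C) writes a plus sign; the discrepancy is evidently a sign slip carried through the paper rather than an error on your side, but you should not claim the substitution ``yields exactly'' the displayed formula without noting it.
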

\begin{proof}
The proof is provided in Appendix C.
\end{proof}
Having defined the closed-form expression of the price charged by the MNO for every user, we have to derive the optimal values of $\alpha^*$ to define the amount of traffic that should be served over both licensed and unlicensed channels. For this, based on Proposition \ref{prop}, we write the allocation problem of the licensed and unlicensed spectra by simplifying the initial problem (\ref{problem}) in which we replace the optimal price by its expression as derived in (\ref{price}). Thus, the allocation problem of the spectrum can be defined as follows:
\begin{equation}
\label{problem_simp}
\begin{aligned}
& \underset{\{\alpha_i(\boldsymbol{\theta})\}_{\forall i}}{\text{max}}
& &\int_{\Theta}\sum_{i\in\mathcal{N}}\Big[\theta_i\bar{v}_i(\alpha_i,\theta_i)-c_i(\theta_i,\boldsymbol{\theta}_{-i})  +\int_{\underline{\theta}_i}^{\theta_i}{\bar{v}_i(\alpha_i^*,\eta)d\eta}\Big]dP(\theta),  \\
& \text{subject to}& & 0\leq\alpha_i(\boldsymbol{\theta})\leq 1,\forall i\in\mathcal{N}.
\end{aligned}
\end{equation}
The assignment problem of the traffic to the licensed and unlicensed bands is equivalent to the Knapsack problem which is NP-hard. Thus, due to the interdependence between the amount of traffic served over the licensed and unlicensed bands for the users, it is difficult to solve (\ref{problem_simp}) using classical optimization approaches. To address this problem, we propose a novel approach to assign the files requested by the users to the licensed and unlicensed bands while accounting for the QoS requirement of each file. To this end, we use the framework of matching theory with incomplete information in which the two disjoint sets of BS-channel pairs and user-subfile pairs are matched to one another by the MNO \cite{liu2014stable,wang2013interdependent}. The files are assumed to be divided into subfiles of the same size. In this model, the types of the users are not known to the BSs neither to the MNO.
\section{Frequency Band Allocation as a Bayesian Matching Game}
\label{matching}
In this section, we formulate a matching game with incomplete information which is also known as \emph{Bayesian matching game} with externalities \cite{liu2014stable,wang2013interdependent,bodine2011peer}, to determine the fraction of each requested file that should be served over both licensed frequency bands and unlicensed channels. A matching game is defined by two-sets of players that evaluate one another using preference relations. The game is considered Bayesian because the BSs do not have information regarding users' types and only have the distribution of the types. Moreover, the assignment of the channels to the users is done by the principal which is not aware of the types of the users neither of their utility functions. The matching problem is considered with externalities due to the interdependence between users preferences. In fact, the performance that is achieved by a given user depends on the global amount of traffic that is served over the same frequency band. We define traffic of the users as set of file chunks $\mathcal{F}=\{f_{11},...,f_{1K_1},...,f_{F1},...,f_{1K_F}\}$ in which each file $f_l$ of size $L_{f_l}$ bits that is requested by a user, is divided into $K_f$ file chunks of size $L_{f}^{\text{min}}$ which we define as the minimum file size that we assign to a frequency band. Then, we formulate the problem of spectrum allocation in an LTE-U system as a Bayesian one-to-many matching game in which a set of frequency bands is assigned to a set of file chunks, where each file chunk is served over a single frequency band and by a single BS. In the model we consider, the users act on behalf of the files and the BSs act on behalf of the licensed and unlicensed channels.

Thus, to model the problem as a Bayesian one-to-many matching game \cite{gu2015matching}, we consider the two sets $\mathcal{A}\in\mathcal{N}\times\mathcal{F}$ of subfile-user pairs and $\mathcal{M}\in\mathcal{C}\times\mathcal{S}$ of licensed/unlicensed frequency band and BS pairs as two sets of players. We denote the set of licensed frequency bands by $\mathcal{C}_1$ and the set of unlicensed channels by $\mathcal{C}_2$ with $\mathcal{C}=\mathcal{C}_1\cup\mathcal{C}_2$. The matching is defined as an assignment of subfiles and users pairs in $\mathcal{A}$ to BSs and channels pairs in $\mathcal{M}$ in which the BSs act on behalf of the licensed and unlicensed channels as they are the ones to decide on which frequency band they serve their traffic. Moreover, the users act on behalf of the files and determine the BS and channel to accept or not. Depending on the achievable rate and the price charged by the MNO, every user $i$ builds a preference relation $\succ_{(i,f)}$ for its file chunk $f$ over subsets of BS and channel pairs in $\mathcal{M}$ and being unmatched $\emptyset$. Similarly, every BS builds a preference relation $\succ_{(s,c)}$ \cite{bodine2011peer} for its channel $c$ over the pairs of users and subfiles in $\mathcal{A}$. In classical matching games, the players define their preferences based on a utility function that measure their performance in the network when matched to any of the players in the opposite set. In our model, the utility of any BS or user depends on the traffic served by each BS and frequency band which is a function of users' type. Therefore, in the presence of types distribution, instead of computing the exact utilities, the players define their preferences based on the expected utilities.

The number of users that a BS can serve over a given frequency band depends on the total traffic load on each frequency band as well as the channel conditions. In a matching game, the capacity of a pair $(s,c)$ is known as a quota that we denote by $q_{(s,c)}$ \cite{Roth1990}. A one-to-many matching game can be defined as follows.
\begin{definition}
A \emph{one-to-many matching} $\mu$ is a mapping from the set $\mathcal{A}\cup \mathcal{M}$ into the set of all subsets of $\mathcal{A}\cup \mathcal{M}$ such that for every $m=(s,c) \in \mathcal{M}$ and $a=(i,f) \in \mathcal{A}$ \cite{Roth1991}:
\begin{enumerate}
\item $\mu (m)$ is contained in $\mathcal{A}$ and $\mu(a)$ is an element of $\mathcal{M}$; 
\item $|\mu (m)| \leq q_{(s,c)}$ for all $(s,c)$ in $\mathcal{M}$;
\item $|\mu (a)| \leq 1$ for all $a$ in $\mathcal{A}$;
\item $m$ is in $\mu (a)$ if and only if $a$ is in $\mu(m)$,
\end{enumerate}
with $\mu (m)$ being the set of user-subfile pairs that is associated to channel $c$ and BS $s$ under matching $\mu$.
\end{definition} 
This definition states that a matching is a one-to-many relation in the sense that each channel-BS pair is matched to a set of user-subfile pairs, and each subfile is served over a single frequency band and a single BS. Before setting an assignment of subfiles to frequency bands, each player needs to specify its \emph{preferences} over subsets of the opposite set based on its goal in the network. We use the notation $\mathcal{A}_1 \succ_{(s,c)} \mathcal{A}_2$ to represent that BS $s$ prefers to serve over frequency band $c$, the user-subfile pairs in the set $\mathcal{A}_1 \subseteq \mathcal{A}$ than to serve the ones in $\mathcal{A}_2 \subseteq \mathcal{A}$. A similar notation is used for the users to set a preference list over the opposite set of BS-channel pairs. Faced with a set $\mathcal{M}$ of possible partners, a user that requests a subfile $f$ can determine a preference relation over the frequency bands and BSs pairs. We denote by $(\mu,s_1,c_1)\succ_a (\mu,s_2,c_2)$ the preference of user-subfile pair $a=(i,f)$ when user $i$ prefers being served by BS $s_1$ over frequency band $c_1$ to being served by BS $s_2$ over frequency band $c_2$ given the current matching $\mu$, with $(\mu,s_1,c_1)\neq(\mu,s_2,c_2)$. Next, we define the preferences of the users and the BSs.

\emph{1) Preferences of users:} From the users' side, each user seeks to maximize its own individual expected utility function for a given requested subfile. Therefore, from the users' point of view, we use the difference between the achievable rate and the price charged by the MNO for the user as the utility function. Thus, using the channel coefficient estimation, and the distribution of other users types, a user $i$ determines its utility for every subfile it requests $f$ given the possible BS-channel pairs using the expected utility defined in (\ref{exp_utility_users}), and thus, builds a  preference vector $\xi_a$ with $a=(i,f), \forall f\in\mathcal{F}_i$ and $\mathcal{F}_i\subseteq\mathcal{F}$ being the set of file chunks that user $i$ requests.

\emph { 2) Preferences of the base stations:} We now define a novel preference scheme at the BSs to give priority to the users based on the type they reveal for the MNO. 
A similar model is considered in \cite{semiari2014matching} in which it is shown that exploiting the information concealed in user' preferences offers considerable gains for the players. However, unlike \cite{semiari2014matching}, not only the preferences of users impact the utility of the BSs but also the types of users as discussed previously. Moreover, we account for the fact that users may manipulate the outcome of the matching process in their favor by misreporting their types \cite{wang2013interdependent}. Furthermore, unlike \cite{semiari2014matching}, the BSs can serve multiple files at a given frequency band which introduces externalities as the performance achievable by every user depends on the types of other users which indirectly determine the traffic load on each channel. 

The utility of a BS when serving a pair $a=(i,f)$ of type $\theta_a$ over frequency band $c$ given that users reveal their real types, can be given by:
\begin{equation}
\gamma_{ma}=\pi_a(\theta_a,\boldsymbol{\theta}_{-a})-c_a(\mu(\boldsymbol{\theta}),\theta_a,\boldsymbol{\theta}_{-a}),
\end{equation}
where we replace $\alpha_i$ in the cost function by the current matching $\mu$ since the fraction of traffic that is served over each of the licensed and unlicensed bands is determined by the current matching outcome. The utility of the BS represents the difference between the price that the user pays and the cost for being served at its target QoS. Knowing that users have different QoS requirements, we introduce a priority vector $\boldsymbol{\phi}\in \{0,1\}^{1\times A}$ for the user-subfile pairs in $\mathcal{A}$. This vector allows the BSs to assign priorities for the user-subfile pairs. Depending on the priority given to a user-subfile pair, a BS will promote the utility of the particular user-subfile pair. If pair $a$ has another licensed frequency band option to which it can apply according to its preference vector, the BS sets $\boldsymbol{\phi}(a)=1$, otherwise $\boldsymbol{\phi}(a)=0$. Consequently, the priority vectors are different at each BS and frequency band pair. Thus, instead of ranking users by only accounting for the cost of serving the subfile of the user and the price charged by the MNO, each BS takes into account the QoS requirements of every subfile. The utility function of each BS-channel pair $m=(s,c)$ when serving a pair $a=(i,f)$ given the types distribution of the users and the current matching $\mu$ can be defined as follows:
\begin{align}
\label{pref_m}
\xi_{ma}(I,\mu,\boldsymbol{\theta})=\int_{\Theta}\Big[\epsilon \gamma_{ma}(\theta_a,\boldsymbol{\theta}_{-a})\Big]dP(\theta)+(1-\epsilon)\Psi_{ma}(\eta_a,I(\mu(\boldsymbol{\theta}),\boldsymbol{\theta}),\kappa_{si}),
\end{align}
where the BS's utility is a function of priority coefficient $\eta_a$, resemblance factor $\epsilon\in\{0,1\}$, and users' types. Clearly, user-subfile pair $a$ will be rejected if its utility is not one of the $q_m$ highest utilities. If two users are identified by the BS to which they apply, to have the same priority and the subfiles they request are of the same type, then $\epsilon=1$. Otherwise, the BS assigns $\epsilon=0$ to the utility of those two users. Function $\Psi_{ma}$ is given by:
\begin{align}
\Psi_{ma}(\eta_a,I(\mu(\boldsymbol{\theta}),\boldsymbol{\theta}))=&\int_{\Theta}\Big[\gamma_{ma}(\theta_a,\boldsymbol{\theta}_{-a})+G_{ma}\Big] dP(\theta),\nonumber\\
=\int_{\Theta}\gamma_{ma}(\theta_a,\boldsymbol{\theta}_{-a})&+\frac{c_i(\mu(\boldsymbol{\theta}),\theta_a,\boldsymbol{\theta}_{-a})}{\theta_a\eta_a}dP(\theta).
\end{align}
The promotion function $G_{ma}$ represents the amount of promotion given to each user. It enables the BSs to prioritize serving the files with high type and less remaining options, over the licensed bands as we show later. That is, a BS decreases the cost for serving this user based on user's priority $\eta_a$ and type $\theta_a$. The higher the priority or the type of a given subfile, the more promotion it receives from the BS. Basically, let $\eta_a\in\{\eta^{\prime},\eta^{\prime\prime},\eta^{\prime\prime\prime}\}$ indicate the first, second and third priority coefficient, respectively. Parameter $\epsilon$ is used to avoid prioritizing two subfiles that have the same priority, since the promotion is a function of the users types and channel estimations knowing that a single user may have multiple requests. Clearly, the proposed priorities allow to promote subfiles that are experiencing a lower data rate compared to their required QoS, thus, allowing them to have a better BS and spectrum association. The prioritizing procedure is described as follows.

For every frequency band $c$, a BS $s$ can group the user-subfile pairs into three priority classes as follows:\\
{\bf Priority 1:} This includes user-subfile pairs that have BS $s$ and channel $c$ as their first and their only remaining preference with $c$ being a licensed frequency band. Therefore, these remaining applicants have been accepted by BS $s$ and channel $c$ in the first iteration of proposals. That is, all $a$ have $\boldsymbol{\phi}_m(a)=0$ and $\xi_a(1)=(s,c)$.\\
{\bf Priority 2:} This includes user-subfile pairs for which BS $s$ and channel $c$ is not the first preference but it is the only remaining BS in the preference list that contains a licensed band. In other words, $\boldsymbol{\phi}_m(a)=0$ and $\xi_a(1)\neq(s,c)$ and $\exists~ m_1\neq m\in\mathcal{M} \text{ with } c_1\in\mathcal{C}_1 \text{ such that } (s_1,c_1)\succ_{a} (s,c)$.\\
{\bf Priority 3:} This includes user-subset pairs that, if and when rejected by BS $s$, they still have other choices in their preference list that include a licensed frequency band.

In such practical systems in which users only know the distribution of the other users' types, the goal of the users is to select their contracts so that they reach a Bayesian stable matching (BSM) that can be defined as follows \cite{xiao2015bayesiano,xiao2015bayesian}:
\begin{definition}
The BSM represents the state in which none of the users can improve its expected utility by selecting another contract given the distribution of other users' types.
\end{definition}
To reach the BSM outcome, next, we propose a distributed matching algorithm for the assignment of the unlicensed bands to the files requested by the users.

\emph{3) Proposed LTE-U Spectrum Allocation Algorithm:} In Table I, we introduce a spectrum allocation algorithm in an LTE-U system based on the deferred acceptance algorithm \cite{gu2015matching,roth1992two}. Unlike classical one-to-many matching games with externalities in which a swap-matching is defined as an outcome of the algorithm \cite{pantisano2013matching}, the difference is in the utility functions of the players. In fact, since the distribution of users types is known to the BSs, every BS and user can determine an estimation of the interference that any user may experience over a given frequency bands. Thus, the players do not need to update their utilities and preferences based on the current matching $\mu$ at every proposals period.


\begin{table}[!t]
\label{algorithm-table}
\footnotesize
  \centering
  \caption{
    \vspace*{-0.3em}Proposed LTE-U spectrum allocation algorithm.}
    \begin{tabular}{p{14cm}}
      \hline
      \vspace{.1cm}
{\bf inputs:} $\mathcal{M}; \mathcal{A}$; $\forall i, \pi_i $; $P(\theta)$\\
\vspace{-0.1cm}\textbf{Phase 1 - Specification of the preferences}   \vspace*{.1em}\\
\hspace*{1em}- Each BS-channel pair and user-subfile pair set its preference list based on the utility functions  (\ref{pref_m}) and (\ref{exp_utility_users}), respectively.\vspace*{.1em}\\
\textbf{Phase 3 - Matching algorithm}   \vspace*{.1em}\\
\hspace*{1em}\textbf{Repeat}\vspace*{.1em}\\
\hspace*{2em}- For every subfile $f\in\mathcal{F}_i$, user $i$ sends its preference list to the next BS-band pair $m$ to which it has not applied yet. \vspace*{.3em}\\
\hspace*{2em}- Every BS $s$ updates its applicants list for every channel $c\in\mathcal{C}$ and assigns priorities to the users as defined in Section IV. 2. Then, the BS updates its preference list $\phi_m$ for every pair $m\in\{s\}\times\mathcal{C}$. \vspace*{.3em}\\
\hspace*{2em}- Every BS $s$ accepts the $q_{(s,c)}$ most preferred applicants for every channel $c\in\mathcal{C}$ and rejects the others.\vspace*{.3em}\\
\hspace*{2em}- The rejected user-subfile pairs remove the rejecting BS-channel pair from their list of preferences.\vspace*{.1em}\\
\hspace*{1em}\textbf{Until} no user-subfile pair is rejected.\vspace*{.1em}\\
{\bf Output:} Bayesian stable matching $\mu$.\\
   \hline
    \end{tabular}\label{tab:algo1}\vspace{-0.2cm}
\end{table}
\begin{proposition}
\label{theo_2}
The proposed spectrum allocation algorithm for LTE-U networks is guaranteed to converge to a Bayesian stable outcome.
\end{proposition}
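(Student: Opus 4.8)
The plan is to follow the classical convergence argument for the deferred acceptance algorithm, adapted to the Bayesian setting in which every preference relation is built from \emph{expected} utilities rather than realized ones. First I would establish finite termination: in Phase~3 of the algorithm of Table~\ref{tab:algo1}, whenever a user-subfile pair $a=(i,f)$ is rejected by a BS-channel pair $m=(s,c)$, it permanently deletes $m$ from its preference list $\xi_a$ and never proposes to it again. Since each list is finite and there are at most $|\mathcal{A}|\cdot|\mathcal{M}|$ such deletions in total, the \textbf{Repeat} loop can execute only finitely many times before no user-subfile pair is rejected, which is exactly the stopping criterion. Hence the algorithm outputs a well-defined matching $\mu$, and it satisfies the one-to-many constraints: each BS keeps at most $q_{(s,c)}$ tentative applicants per channel, and each pair holds at most one tentative acceptance, so $|\mu(m)|\le q_{(s,c)}$ and $|\mu(a)|\le 1$.

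Second, I would verify individual rationality: a user-subfile pair only ever proposes to BS-channel pairs that are acceptable to it (ranked above $\emptyset$ in $\xi_a$), and a BS only tentatively accepts applicants whose expected utility $\xi_{ma}$ is among its $q_m$ highest; therefore at termination no player is matched to a partner it ranks below being unmatched. Third, and at the core of the proof, I would show $\mu$ admits no blocking pair. Suppose for contradiction that some $a$ and some $m$ strictly prefer one another to their assignments under $\mu$. Because $a$ issues proposals in decreasing order of $\xi_a$ and prefers $m$ to $\mu(a)$, it must have proposed to $m$ at an earlier iteration and been rejected. A rejection by $m$ occurs only when $m$ is already holding $q_m$ applicants each ranked at least as high as $a$ under $\succ_{(s,c)}$; and since the expected utilities are computed once from the common type distribution $P(\boldsymbol{\theta})$ --- this is precisely why, as noted after Table~\ref{tab:algo1}, the players need not re-evaluate their utilities against the running matching $\mu$ --- the set of applicants held by $m$ can only improve in the $\succ_{(s,c)}$ order as the algorithm proceeds. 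Thus at termination $m$ still holds $q_m$ applicants all weakly preferred to $a$, contradicting that $m$ prefers $a$ to one of its matched pairs. Since no user-subfile pair can belong to a blocking pair and the preferences encode exactly the expected utilities, no user can raise its expected utility by switching to a different BS-channel pair, equivalently a different contract, given the distribution of the other users' types; this is the definition of a Bayesian stable matching.

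The step I expect to be the main obstacle is reconciling the ``applicants held by $m$ only improve'' monotonicity with the \emph{dynamic priority reassignment}. The priority class of a pair $a$ at $m$ --- and hence the promotion term $G_{ma}$, the coefficient $\epsilon$, and ultimately $\xi_{ma}$ --- depends on whether $a$ still has an unexhausted licensed option left in $\xi_a$, so the list $\phi_m$ is itself updated as pairs get rejected. I would resolve this by arguing the reassignment is monotone in the favourable direction: a pair can only lose licensed options over the course of the run, so its priority can only be promoted (Priority~3 $\to$ Priority~2 $\to$ Priority~1), never demoted, relative to the pairs still competing at $m$. Consequently a pair that $m$ tentatively accepts is never later displaced by a pair that was strictly worse at the moment of acceptance, which is exactly what the blocking-pair argument needs. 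Making this monotonicity precise --- in particular, checking that the three priority classes together with the $\epsilon$-based tie-breaking induce a fixed weak order on $\mathcal{A}$ that is only refined, not reversed, as iterations proceed --- is the one place where the externalities genuinely complicate what is otherwise a textbook deferred-acceptance proof.
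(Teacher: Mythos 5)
Your proof is correct and follows essentially the same route as the paper's: both rest on the observation that preferences are built from expected utilities computed once from the common type distribution $P(\boldsymbol{\theta})$, so the externalities do not feed back through the running matching and the classical deferred-acceptance argument (finite termination from permanent deletions, plus the no-blocking-pair contradiction) applies. The paper's own proof is only a three-sentence sketch of this reduction, so your additional scrutiny of the dynamic priority reassignment at the BSs, and the monotone-promotion argument you sketch to handle it, goes beyond what the published proof verifies rather than diverging from it.
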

\begin{proof}
The convergence of the algorithm is guaranteed due to the externalities that are canceled out in a matching game with incomplete information. In fact, given that the users estimate their utilities, they do not need to have the exact set of users and subfiles that are served over every channel. Moreover, the fact that the algorithm is based on the classical deferred acceptance algorithm guarantees the convergence to a stable outcome.
\end{proof}


%
\section{Simulation Results}
\label{sim}
 For our simulations, we consider 200 users unless stated otherwise, deployed in a $1\times1$ km area served by 20 BSs via 12 unlicensed channels and 120 backhaul resource blocks (BRBs). The users can request files of different QoS levels that we rank from 1 to 6 for the corresponding values $[200,250,350,450,550,650] \text{ Mbps}$. The network parameters are provided in Table II.
\begin{center}
\scriptsize
   \begin{table}[t!] 
\caption{ Simulation parameters.}\label{tab:table}
 \begin{tabular}{|c|c||c|c|} 
 \hline
Parameters & Values & Parameters & Values \\ 
 \hline
  Total number of BSs & $20$  &  Required rate per user& $\{0,0.2,0.4,0.5,0.6,0.7 \}$Mbps  \\
 \hline
  Total number of WAPs& 10 & WAP communication range & $90$ m \\
 \hline
Total number of unlicensed channels & $12$ &  BS communication range& $ 200$ m  \\ 
 \hline
Total number of licensed resource blocks & 120 &  Path loss exponent & 3  \\
 \hline
  Noise power& -174 dBm/Hz&  Total number of files & 100 \\ 
 \hline
Total bandwidth& 1 GHz & Size of a file   & 50 Mbits \\ 
 \hline

\end{tabular}
\end{table}
\end{center}
\vspace{-1cm}

Fig. \ref{res1} shows the mean achievable rate by the users in the scenario with complete information in which case the BSs know the rate requirement of the users and the Bayesian framework we considered. The performance of the users in the proposed mechanism approaches their performance in the case of complete information. The data rate of the users decreases in the network due to the interference that is larger when the number of users increases in the network. Moreover, the allocated resource blocks per user decreases as the number of users increases.

\begin{figure}
\centering
\includegraphics[scale=0.65]{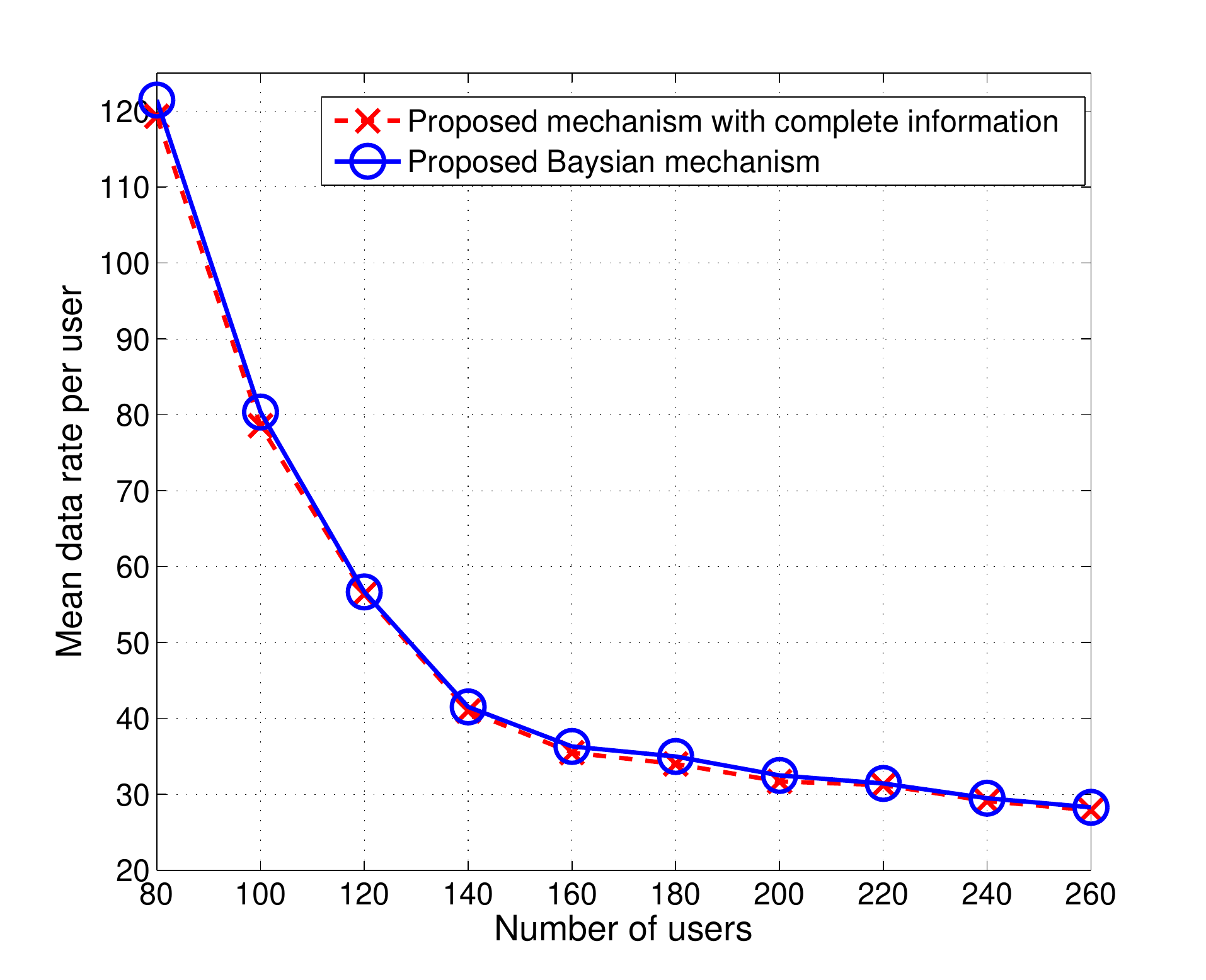}\vspace{-0.1cm}
\caption{Numerical results for the mean data rate per user as a function of the number of users.}\vspace{-0.3cm}
\label{res1}
\end{figure}
In Fig. \ref{res2}, we show the impact of the evolution of users' utility when they select different contracts. A given user achieves its highest utility when selecting the contracts designed for their own types. Indeed, when a user selects a contract of a type higher than its own type, the user cannot afford the resource blocks that are allocated to serve its content. However, when a user selects a contract of lower type, the resource blocks that are allocated to serve its content are not sufficient to meet its QoS requirements.
\begin{figure}
\centering
\includegraphics[scale=0.65]{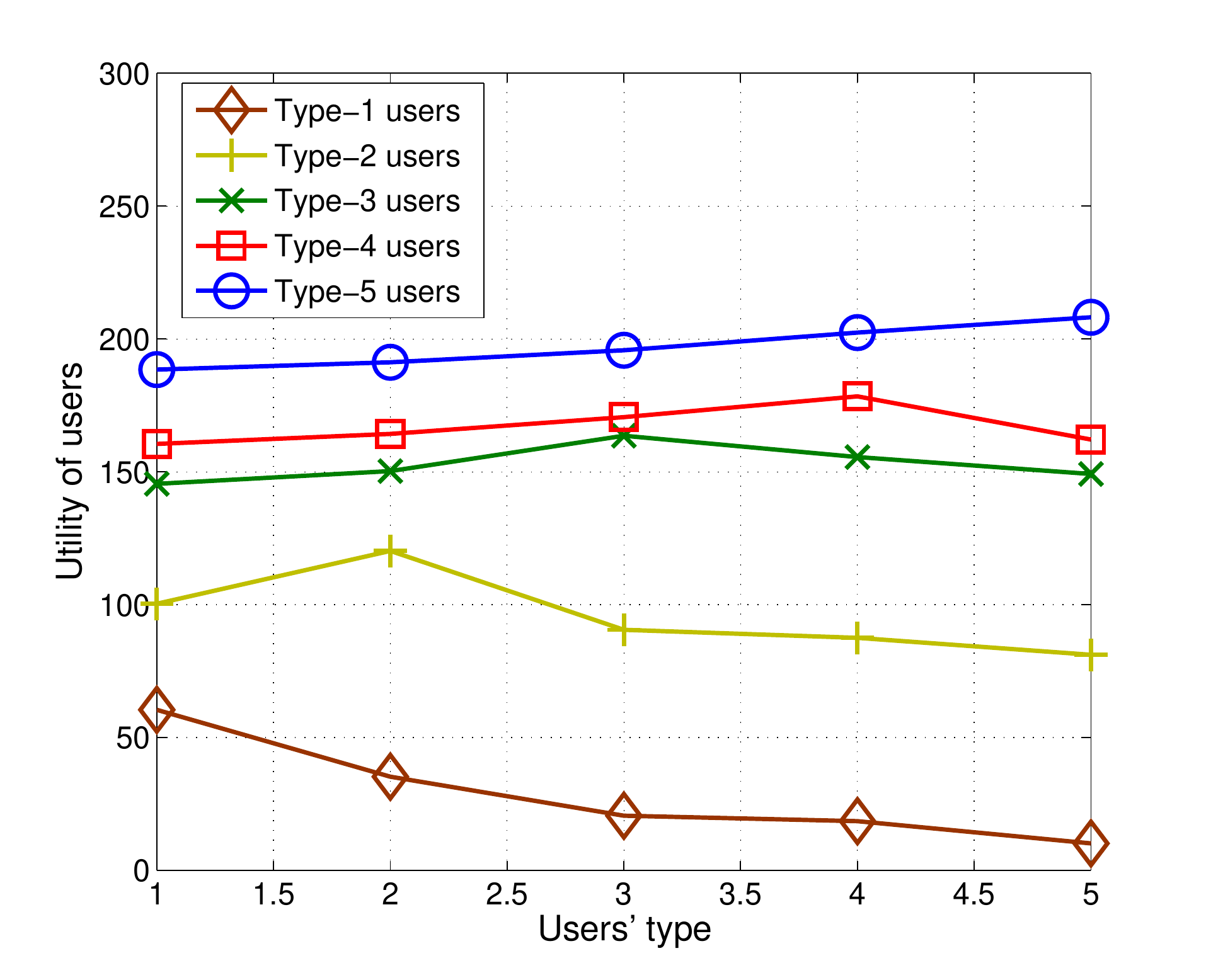}\vspace{-0.1cm}
\caption{Users utility resulting from the proposed approach as a function of contracts' type.}\vspace{-0.3cm}
\label{res2}
\end{figure}

In Fig. \ref{res5}, we compare the amount of traffic that is served over unlicensed bands for the type 1 users and type 3 users. Depending on the case we consider, we add type 3 or type 1 users to the network. The results show that the amount of traffic offloaded to the unlicensed bands increases for both type 1 and type 3 users as the number of users increases in the system. However, more type 1 traffic is served over unlicensed bands as the requirement in terms of data rate is less critical compared to type 3 traffic and the amount of offloaded traffic remains the same when the number of users exceeds 600 due to the saturation of the unlicensed bands. 

\begin{figure}
\centering
\includegraphics[scale=0.65]{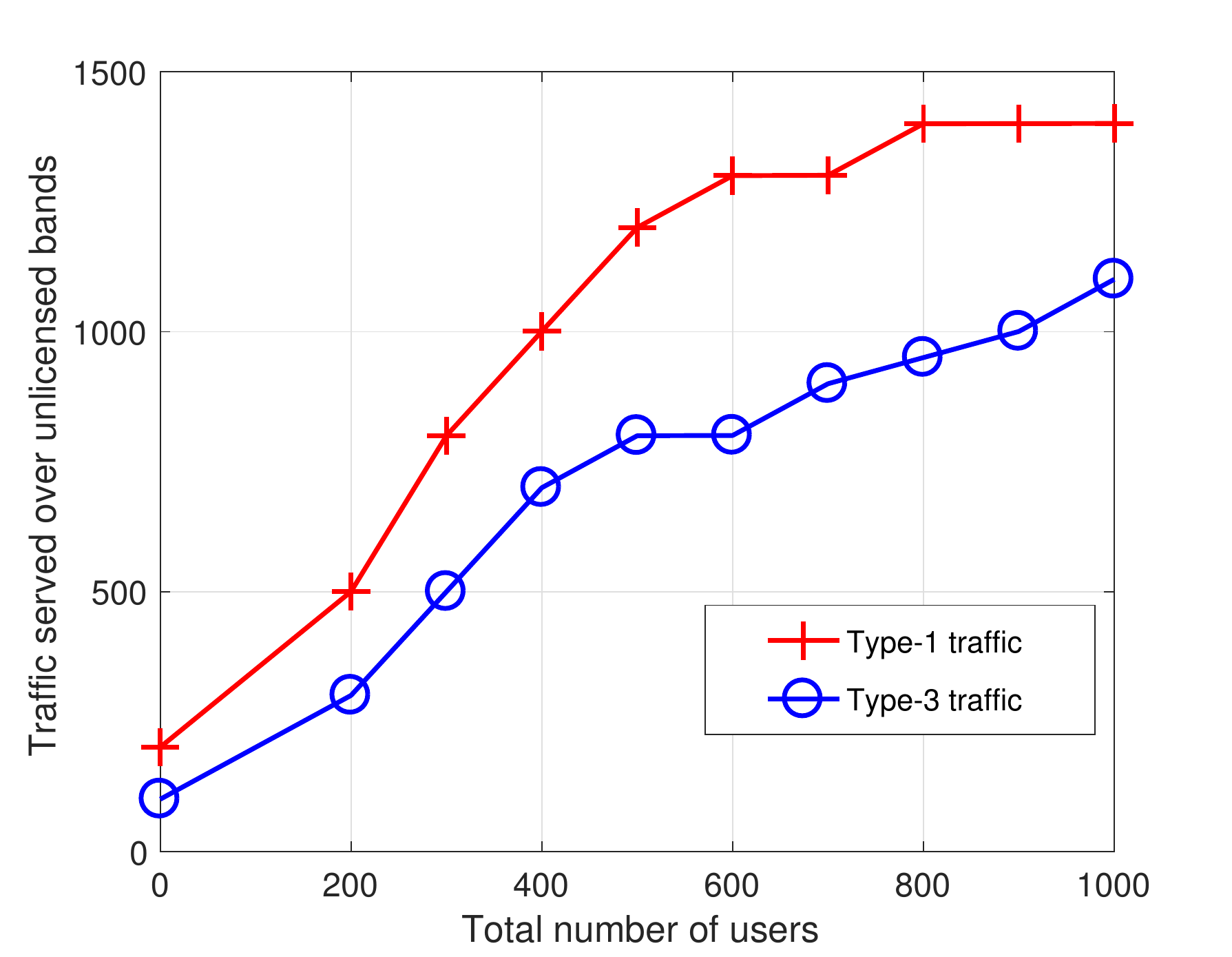}\vspace{-0.1cm}
\caption{Amount of traffic offloaded to the unlicensed bands as a function of the number of users.}\vspace{-0.3cm}
\label{res5}
\end{figure}

Fig. \ref{res3} shows the fraction of users that achieve their QoS requirement as a function of the number of users in the networks. In Fig. \ref{res3}, we compare the proposed mechanism with an allocation framework in which a random number of users accept to be served over the unlicensed channels. The fraction of users that achieve their QoS decreases by increasing the number of users in the networks because the licensed and unlicensed channels have to be shared among a large number of users which decreases their data rate. The proposed mechanism outperforms by up to 45\% the unaware QoS resource allocation algorithm. This is due to the random allocation of the licensed and unlicensed bands that does not account for the requirements of the users. Thus, the random allocation can result in users with high QoS requirements served via unlicensed channels.
\begin{figure}
\centering
\includegraphics[scale=0.6]{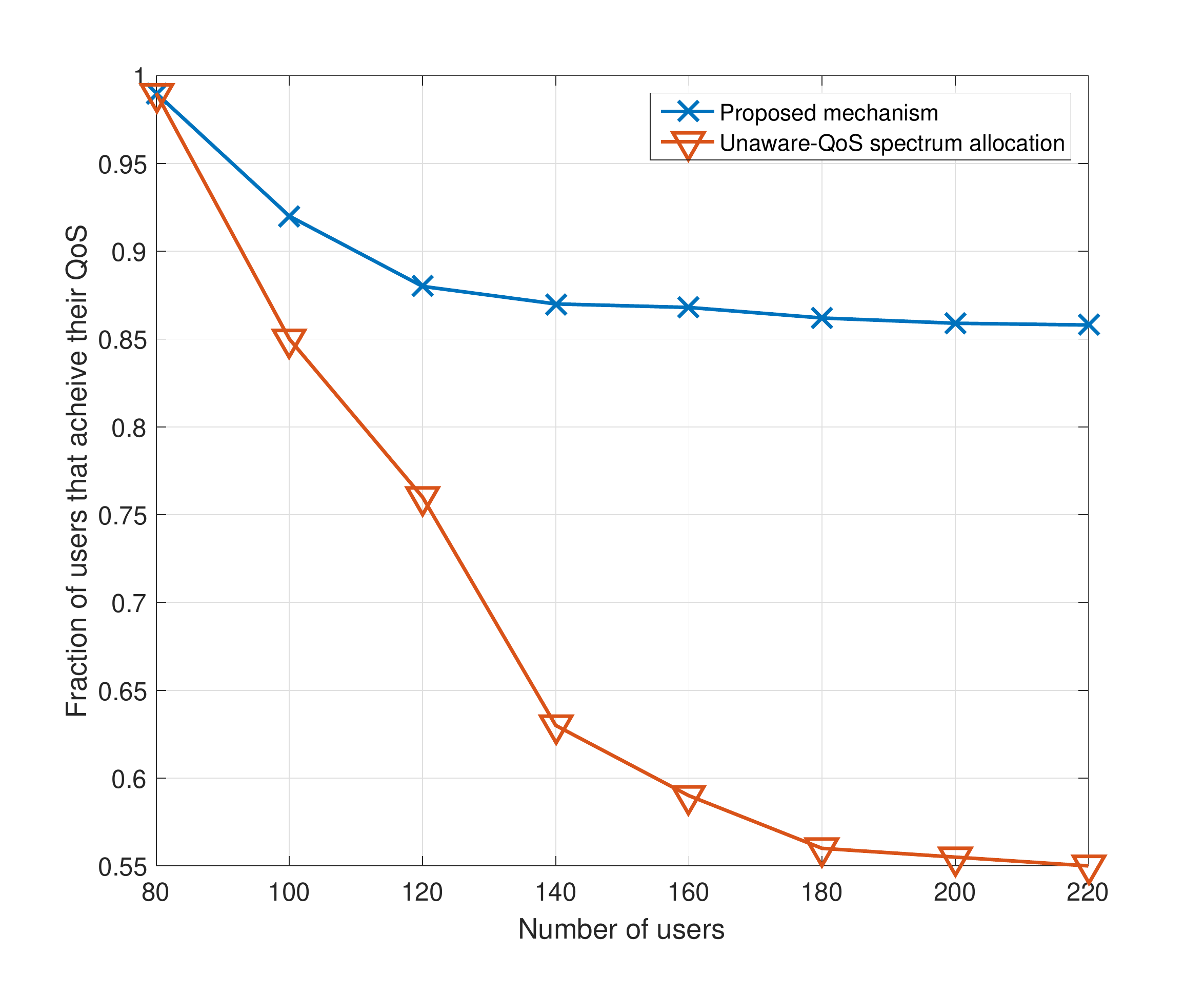}\vspace{-0.1cm}
\caption{Fraction of users that achieve their required QoS as a function of the number of users.}\vspace{-0.3cm}
\label{res3}
\end{figure}

In Fig. \ref{res4}, we compare the proposed contract-based pricing scheme with a uniform pricing mechanism in which the same price is fixed for serving the files. The results in Fig. \ref{res4} show the evolution of the users' utility as a function of the number of users in the system. The experienced utility per the users decreases when increasing the number of users. This is due to the interference from the BSs that impact the achievable data rate per user. When the network has more than 600 users, the difference between the two schemes becomes substantial as the average utility per user becomes twice higher than the utility they achieve under the uniform pricing scheme. This is due to the prices fixed by the MNO that are adapted to the requirement of the users which incite them to select the contract that offers the best possible data rate by accounting for other users requirements and traffic load. 

\begin{figure}
\centering
\includegraphics[scale=0.7]{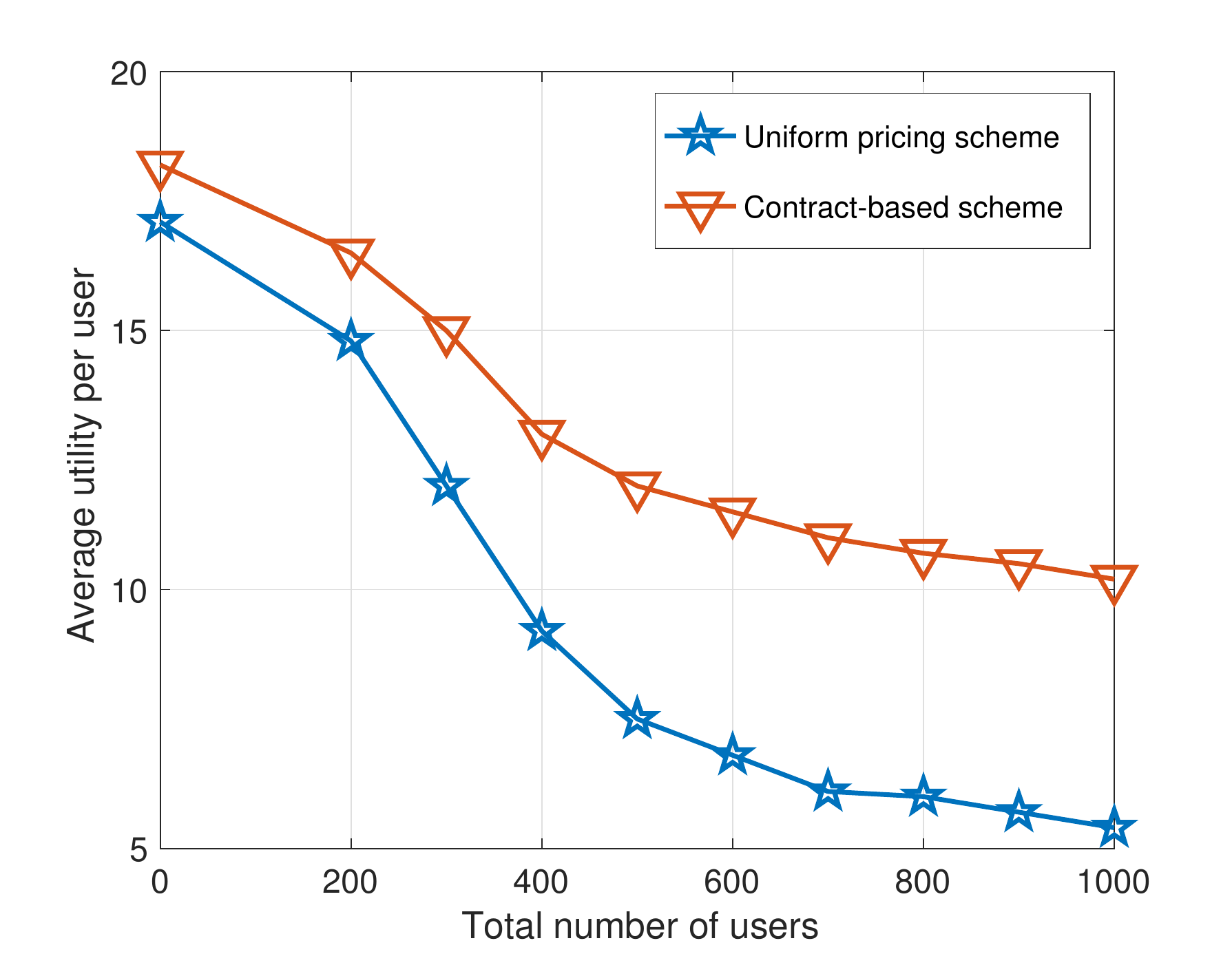}\vspace{-0.1cm}
\caption{Users' utility resulting from the proposed approach and a uniform pricing scheme.}\vspace{-0.3cm}
\label{res4}
\end{figure}

\begin{figure}
\centering
\includegraphics[scale=0.7]{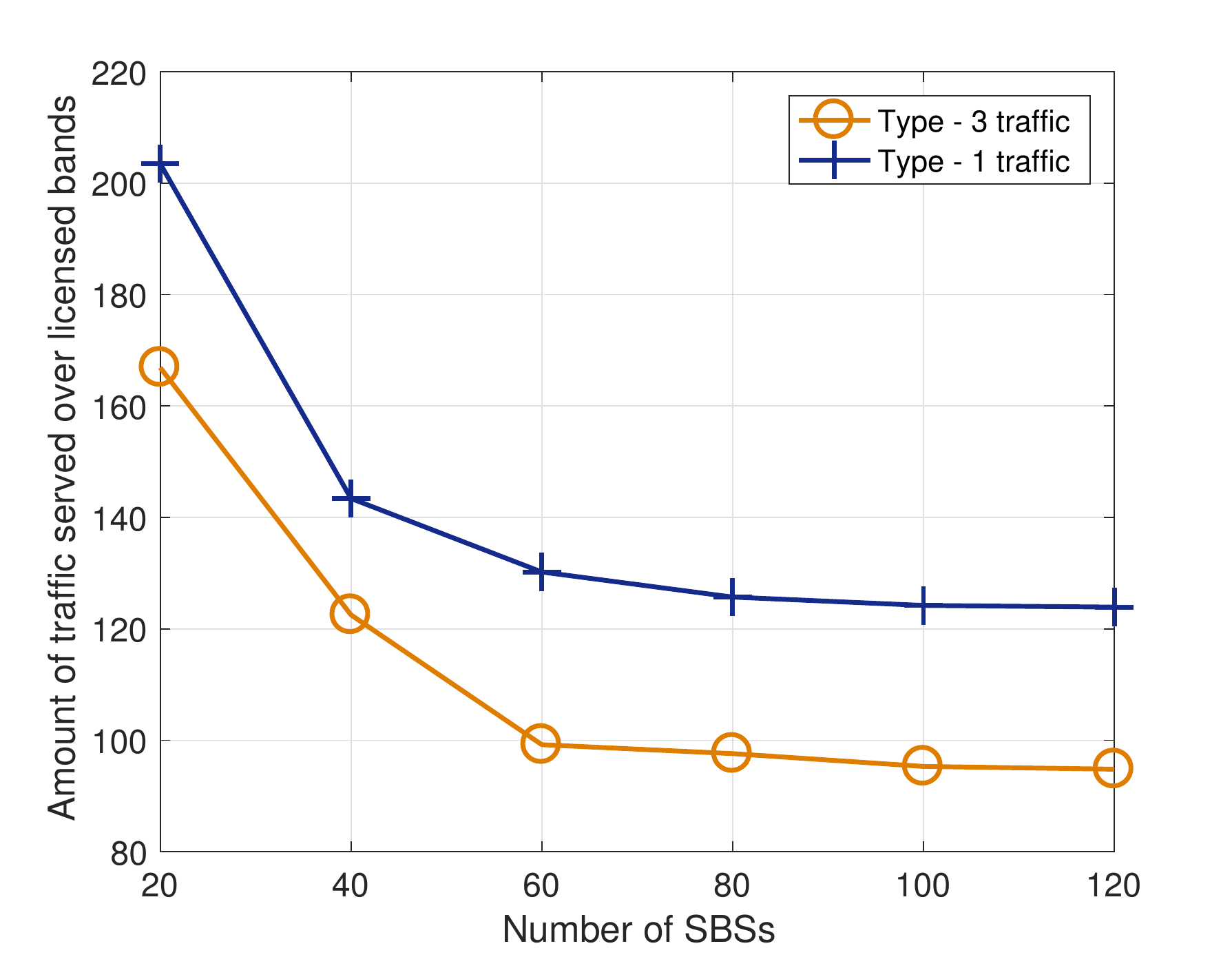}\vspace{-0.1cm}
\caption{Amount of traffic served over licensed bands as a function of the number of BSs.}\vspace{-0.3cm}
\label{res6}
\end{figure}
In Fig. \ref{res6}, we compare the amount of traffic that is served over licensed bands for type 1 users and type 3 users as a function of the number of BSs in the network. The results show that the amount of traffic served via the licensed bands decreases initially for both type 1 and type 3 users as the number of BSs becomes higher in the system. However, more type 1 traffic is served over licensed bands as the data rate requirement of type 1 users is less critical compared to type 3 traffic. Moreover, Fig. \ref{res6} shows that the amount of traffic served over the unlicensed bands remains relatively the same for both type 1 and type 3 users when the number of BSs exceeds 60 BSs. This is due to the limited licensed spectrum resources that are shared among all of the BSs.

\section{Conclusion}
\label{con}
In this paper, we have proposed a novel LTE-U incentive mechanism that motivates users to accept being served over unlicensed channel. To this end, we have used contract theory enabling every MNO to set a pricing mechanism for LTE-U users and then used matching theory to determine the fraction of LTE traffic that should be offloaded to the unlicensed bands while accounting for the QoS requirements of every content. The main advantage of using contract theory is the ability to model information asymmetry and incompleteness given that it is difficult for MNOs to gather information about users requirements. The outcome of the contract mechanism is a contract defined by the price charged by the MNO for serving the content as well as the amount of resource that is required to achieve the required QoS by the users. Moreover, we have proposed a distributed matching algorithm to allocate both licensed and unlicensed frequency bands to the users. Simulations results have shown that the traffic can be better allocated using the proposed mechanism and increase the fraction of users that are served at their requested QoS.

\begin{appendices}
\section{Proof of Lemma 1}
We divide the proof into two parts. First, we prove that if $\bar{\pi}_i(\theta_i)\geq \bar{\pi}_i(\hat{\theta}_i)$, then $\bar{v}_i(\theta_i)\geq \bar{v}_i(\hat{\theta}_i)$. Due to TIBS constraint in (4), we have:
\begin{equation}
\theta_i \bar{v}_i(\alpha,\theta_i)-\bar{\pi}_i(\theta_i)\geq \theta_i \bar{v}_i(\alpha,\hat{\theta}_i)-\bar{\pi}_i(\hat{\theta}_i),
\end{equation}
i.e.,
\begin{equation}
\theta_i\Big[ \bar{v}_i(\alpha,\theta_i)-\bar{v}_i(\alpha,\hat{\theta}_i)\Big]\geq -\bar{\pi}_i(\theta_i) -\bar{\pi}_i(\hat{\theta}_i).
\end{equation}
Since $\bar{\pi}_i(\theta_i)\geq\bar{\pi}_i$, we conclude:
\begin{equation}
\theta_i\Big[ \bar{v}_i(\alpha,\theta_i)-\bar{v}_i(\alpha,\hat{\theta}_i)\Big]\geq -\bar{\pi}_i(\theta_i) -\bar{\pi}_i(\hat{\theta}_i)\geq 0.
\end{equation}
and thus, $\bar{v}_i(\alpha,\theta_i)\geq \bar{v}_i(\alpha,\hat{\theta}_i)$. Next, we prove that if $\bar{v}_i(\alpha,\theta_i)\geq \bar{v}(\alpha,\hat{\theta}_i)$, then $\bar{\pi}_i(\theta_i)\geq \bar{\pi}_i(\hat{\theta}_i)$. Due to TIBS constraint in (4), we have
\begin{equation}
\theta_i\bar{v}_i(\alpha,\theta_i)-\bar{\pi}_i(\theta_i)\geq \theta_i\bar{v}_i(\alpha,\hat{\theta}_i)-\bar{\pi}_i(\hat{\theta}_i),
\end{equation}
which can be transformed to be 
\begin{equation}
\bar{\pi}_i(\hat{\theta}_i)-\bar{\pi}_i(\theta_i)\geq \theta_i\Big[\bar{v}_i(\alpha,\hat{\theta}_i)-\bar{v}_i(\theta_i)\Big].
\end{equation}
Since $\bar{v}_i(\alpha,\hat{\theta}_i)\geq\bar{v}_i(\theta_i)$, we conclude:
\begin{equation}
\bar{\pi}_i(\hat{\theta}_i)-\bar{\pi}_i(\theta_i)\geq \theta_i\Big[\bar{v}_i(\hat{\theta}_i)-\bar{v}_i(\alpha,\theta_i)\Big]\geq0,
\end{equation}
and thus $\bar{\pi}_i(\hat{\theta}_i)\geq\bar{\pi}_i(\theta_i)$.

\section{Proof of Theorem1}
First part of the proof ``if-part": First, we show that 1) and 2) in Theorem \ref{theorem} imply TIBS. Take any two values $\theta_i,\hat{\theta}_i\in[\ubar{\theta}_i,\bar{\theta}_i]$ with $\theta_i>\hat{\theta_i}>\ubar{\theta}$. From 2) we have:
\begin{equation}
\bar{u}_i(\theta_i)-\int_{\bar{\theta}_i}^{\theta_i}\bar{v}_i(\alpha,\eta)d\eta=\bar{u}_i(\hat{\theta}_i)-\int_{\hat{\theta}_i}^{\hat{\theta}_i}\bar{v}_i(\alpha,\eta)d\eta.
\end{equation}
Rearranging terms and using 1) gives:
\begin{flalign}
\bar{u}_i(\theta_i)-\bar{u}_i(\hat{\theta}_i)=\int_{\hat{\theta}_i}^{\theta_i}\bar{v}_i(\alpha,\eta)d\eta\geq\int_{\hat{\theta}_i}^{\theta_i}\bar{v}_i(\alpha,\hat{\theta}_i)d\eta=(\theta_i-\hat{\theta}_i)\bar{v}_i(\alpha,\hat{\theta}_i).
\end{flalign}
Note that $\bar{v}_i(\alpha,\hat{\theta}_i)$ is a constant. Hence,
\begin{equation}\label{1}
\bar{u}_i(\theta_i)\geq\bar{u}_i(\hat{\theta}_i)+(\theta_i-\hat{\theta}_i)\bar{v}_i(\alpha,\hat{\theta}_i)=\theta_i\bar{v}_i(\alpha,\hat{\theta}_i)+\bar{\pi}_i(\hat{\theta}_i).
\end{equation}
Similarly, suppose that $\hat{\theta}_i\geq \theta_i\geq\ubar{\theta}_i$. By the same reasoning, we have
\begin{equation}\label{2}
\bar{u}_i(\hat{\theta}_i)\geq \bar{u}_i(\theta_i)+(\hat{\theta}_i-\theta_i)\bar{v}_i(\alpha,\theta_i)=(\hat{\theta}_i-\theta_i)\bar{v}_i(\alpha,\theta_i).
\end{equation}

Together, (\ref{1}) and (\ref{2}) imply TIBS. Next, we prove that 1), 2) and 3) in Theorem \ref{theorem} imply IIR. We do this by contradiction. Suppose the implication is wrong. Then, there exists some $\theta_i\geq\ubar{\theta}_i$ with $\bar{u}_i(\theta_i)\leq0$. We just established that 1) and 2) imply TIBS. However, TIBS in conjunction with 3) implies:
\begin{equation}
\bar{u}_i(\theta_i)\geq \theta_i\bar{v}_i(\alpha,\ubar{\theta}_i)-\bar{\pi}_i(\ubar{\theta}_i)>\ubar{\theta}_i\bar{v}_i(\alpha,\ubar{\theta}_i)-\bar{\pi}_i(\ubar{\theta}_i),
\end{equation}
a contradiction.

"Only if part": We now show that TIBS implies 1), and 2). For any $i\in\mathcal{N}$ and any two types $\theta_i, \hat{\theta}_i\in[\ubar{\theta}_i,\bar{\theta}_i]$, TIBS requires that,
\begin{equation}\label{3}
\bar{u}_i(\theta_i)\geq \theta_i\bar{v}_i(\alpha,\hat{\theta}_i)-\bar{\pi}_i(\hat{\theta}_i)=\bar{u}_i(\hat{\theta}_i)+(\theta_i-\hat{\theta}_i)\bar{v}_i(\alpha,\hat{\theta}_i),
\end{equation}
and,
\begin{equation}\label{4}
\bar{u}_i(\hat{\theta}_i)\geq \hat{\theta}_i\bar{v}_i(\alpha,\theta_i)-\bar{\pi}_i(\theta_i)=\bar{u}_i(\theta_i)+(\theta_i-\hat{\theta}_i)\bar{v}_i(\alpha,\theta_i).
\end{equation}
Suppose without loss of generality that $\theta_{i}\leq\hat{\theta}_i$. From (\ref{3}) and (\ref{4}), it follows that
\begin{equation}
\bar{v}_i(\alpha,\hat{\theta}_i)\geq \frac{\bar{u}_i(\theta_i)-\bar{u}(\hat{\theta}_i)}{\theta_i-\hat{\theta}_i}\geq \bar{v}_i(\alpha,\theta_i),
\end{equation}
which shows that $\bar{v}_i(.)$ is nondecreasing. Next, letting $\hat{\theta}_i\to\theta_i$, we obtain $\frac{d\bar{u}_i(\theta_i)}{d\theta_i}=\bar{v}_i(\theta_i)$ for all $\theta_i$. Integrating both sides over $[\ubar{\theta}_i,\theta_i]$ gives
\begin{equation}
\bar{u}_i(\theta_i)=\bar{u}(\theta_i)+\int_{\ubar{\theta}_i}^{\theta_i}\bar{v}_i(\alpha,\eta)d\eta,
\end{equation}
$\forall \theta_i$. Finally, note that IIR obviously implies 3) by definition of IIR.

\section{Proof of Proposition \ref{prop}}
\subsection{Feasibility conditions:} We first show how we derive the price in (\ref{price}). We replace conditions (\ref{IR}) and (\ref{IC}) in (\ref{problem}) by the conditions provided in Theorem 1:
\begin{equation}
\label{pb}
\begin{aligned}
& \underset{\{\alpha_i(\boldsymbol{\theta}),\boldsymbol{\pi}_i(\boldsymbol{\theta})\}_{\forall i}}{\text{max}}
& &\int_{\Theta}\sum_{i\in\mathcal{N}}\left[\pi_i(\theta_i,\boldsymbol{\theta}_{-i})-c_i(\theta_i,\boldsymbol{\theta}_{-i})\right]dP(\theta),  \\
& \text{subject to}
& &  \bar{v}_i(\theta_i) \text{ is nondecreasing}, \\
&&& \bar{u}_i(\theta_i)=\bar{u}_i(\ubar{\theta}_i)+\int_{\ubar{\theta}_i}^{\theta_i}{\bar{v}_i(\alpha,\eta)d\eta}, \forall \theta_i, i\in\mathcal{N},\\
&&& \bar{u}_i(\ubar{\theta}_i)\geq 0, \forall i\in\mathcal{N},\\
&&& 0\leq\alpha_i(\boldsymbol{\theta})\leq 1,\forall i\in\mathcal{N}.
\end{aligned}
\end{equation}
After doing some algebraic manipulations on the second constraint in (\ref{pb}), we have:
\begin{flalign}
\label{cons}
-\int_{\Theta}\pi_i(\boldsymbol{\theta})=\ubar{\theta}_iv_i(\alpha,\ubar{\theta}_i)-\pi_i(\ubar{\theta}_i)-\int_{\Theta}\pi_i(\boldsymbol{\theta})\left(\theta_i-\frac{1-P_i(\theta_i)}{p_i(\theta_i)}\right)dP(\theta).
\end{flalign}
Inserting (\ref{cons}) in the objective function of the optimization problem (\ref{pb}), the problem can be relaxed and written as:
\begin{equation}
\label{relax}
\begin{aligned}
& \underset{\{\alpha_i(\boldsymbol{\theta}),\boldsymbol{\pi}_i(\boldsymbol{\theta})\}_{\forall i}}{\text{max}}
& &\int_{\Theta}\sum_{i\in\mathcal{N}}\left[\pi_i(\theta_i,\boldsymbol{\theta}_{-i})-c_i(\theta_i,\boldsymbol{\theta}_{-i})\right]dP(\theta),  \\
& \text{subject to}
& &  \bar{v}_i(\alpha,\theta_i) \text{ is nondecreasing}, \\
&&& \bar{u}_i(\ubar{\theta}_i)\geq 0, \forall i\in\mathcal{N},\\
&&& 0\leq\alpha_i(\boldsymbol{\theta})\leq 1,\forall i\in\mathcal{N}.
\end{aligned}
\end{equation}

We can see that the second constrain in (\ref{relax}) must bind, thus we have:
\begin{equation}
\label{e}
\bar{u}_i(\ubar{\theta}_i)=0,
\end{equation}
for all $i\in\mathcal{N}$. By Theorem \ref{theorem}, we can write (\ref{e}) as:
\begin{equation}
\bar{\pi}_i^*(\theta_i)= \theta_i\bar{v}_i(,\alpha_i^*,\theta_i)+\int_{\ubar{\theta}_i}^{\theta_i}\bar{v}_i(\alpha_i^*,\eta)d\eta,
\end{equation}
where we used $u_i(\theta_i)=\theta_i\bar{v}_i(\alpha_i,\theta_i)-\bar{\pi}_i(\theta_i)$. Next, we proof the optimality and the uniqueness of this solution.

\subsection{Proof of optimality:} We first show that the optimal price in (\ref{prop}) maximizes the utility of the MNO assuming that the amount of traffic served over the licensed and unlicensed bands is fixed to $\{\alpha_i^*, \forall i\}$. Suppose that that there exists another feasible price $\{\tilde{\pi}_i, \forall i\}$ which achieves a better solution than $\{\pi^*_i,\forall i\}$ in (\ref{prop}). Since the expected utility of the MNO given by the objective function in (\ref{problem}) is an increasing function of the total sum of the prices $\sum_{i\in\mathcal{N}}\pi_i$ payed by the users, we must have that $\sum_{i\in\mathcal{N}}\tilde{\pi}_i>\sum_{i\in\mathcal{N}}\pi^*_i$. Thus, for at least one user of type $\theta_i$, we have $\tilde{\pi}_i>\pi^*_i$ which is equivalent to $\overline{\tilde{\pi}}_i>\overline{\pi^*}_i$ because we have $\bar{\pi}_i(\theta_i,\boldsymbol{\theta}_{-i})=\int_{\Theta_{-i}}\pi_i(\theta_i,\boldsymbol{\theta}_{-i})dP_{-i}(\boldsymbol{\theta}_{-i})$. 

If $i=1$, then $\overline{\tilde{\pi}}_1>\overline{\pi^*}_1$. Since $\overline{\pi^*}_i=\theta_1\bar{v}_1(\alpha_i,\theta_i)$, then $\overline{\tilde{\pi}}_1>\theta_1\bar{v}_1(\alpha_i,\theta_i)$. However, it violates the IIR constraint for type $\theta_1$. Thus, we have to check if for any $i>1$ the supposition can be satisfied.

Since $\{\tilde{\pi}_i,\forall i\}$ is feasible, then $\{\tilde{\pi}_i,\forall i\}$ must satisfy the right inequality of  condition 3) in Theorem \ref{theorem}. Thus, we have:
\begin{equation}
\overline{\tilde{\pi}}_i\leq\overline{\tilde{\pi}}_{i-1}+\theta_i(\bar{v}_i(\alpha_i,\theta)-\bar{v}_{i-1}(\alpha_{i-1},\theta_{i-1})).
\end{equation}
By substituting $\theta_i(\bar{v}_i(\alpha_i,\theta)-\bar{v}_{i-1}(\alpha_{i-1},\theta_{i-1}))=\overline{\pi_i^*}-\overline{\pi^*}_{i-1}$  as in (\ref{prop}) into this inequality, we have
\begin{equation}
\overline{\tilde{\pi}}_{i-1}>\overline{\pi^*}_{i-1}.
\end{equation}
Using the above argument repeatedly, we finally obtain that
\begin{equation}
\overline{\tilde{\pi}}_1>\overline{\pi^*}_1=\theta_1\bar{v}_1(\alpha_1,\theta_1),
\end{equation}
which violates the IIR constraint for type-$\theta_1$ again.
\subsection{Proof of uniqueness:} Here, we prove that the price defined in (\ref{prop}) is the unique solution that maximizes the objective function in (\ref{problem}). We also prove by contradiction. Suppose that there exists another $\{\hat{\pi}_i,\forall i\}\neq\{\pi_i^*,\forall i\}$ such that $\sum_{i\in\mathcal{N}}{\overline{\hat{\pi}}_i}=\sum_{i\in\mathcal{N}}{\overline{\pi^*}_i}$ in the objective function of (\ref{problem}). 

Thus, there is at least one user for which the price satisfies, $\overline{\hat{\pi}}_i<\overline{\pi^*}_i$ and one user for which $\overline{\hat{\pi}}_l>\overline{\pi^*}_l$. We can focus on type $\theta_l$ and $\overline{\hat{\pi}}_l>\overline{\pi^*}_l$.
By using the same argument as before, we have $\overline{\hat{\pi}}_1>\overline{\pi^*}_1=\theta_1\bar{v}_1(\alpha_1,\theta_1)$ which violates the IIR constraint for type $\theta_1$.

\end{appendices}

\bibliographystyle{IEEEtran}
\bibliography{references}

\begin{thebibliography}{10}
\providecommand{\url}[1]{#1}
\csname url@samestyle\endcsname
\providecommand{\newblock}{\relax}
\providecommand{\bibinfo}[2]{#2}
\providecommand{\BIBentrySTDinterwordspacing}{\spaceskip=0pt\relax}
\providecommand{\BIBentryALTinterwordstretchfactor}{4}
\providecommand{\BIBentryALTinterwordspacing}{\spaceskip=\fontdimen2\font plus
\BIBentryALTinterwordstretchfactor\fontdimen3\font minus
  \fontdimen4\font\relax}
\providecommand{\BIBforeignlanguage}[2]{{%
\expandafter\ifx\csname l@#1\endcsname\relax
\typeout{** WARNING: IEEEtran.bst: No hyphenation pattern has been}%
\typeout{** loaded for the language `#1'. Using the pattern for}%
\typeout{** the default language instead.}%
\else
\language=\csname l@#1\endcsname
\fi
#2}}
\providecommand{\BIBdecl}{\relax}
\BIBdecl

\bibitem{cisco}
Cisco, ``Cisco visual networking index: Forecast and methodology, 2016-2021,''
  \emph{White Paper}.

\bibitem{zhanglte}
R.~Zhang, M.~Wang, L.~X. Cai, Z.~Zheng, X.~S. Shen, and L.-L. Xie,
  ``{LTE}-unlicensed: The future of spectrum aggregation for cellular
  networks,'' \emph{IEEE Wireless Communications}, vol.~22, no.~3, pp.
  150--159, June 2015.

\bibitem{zhang2015coexistence}
H.~Zhang, X.~Chu, W.~Guo, and S.~Wang, ``Coexistence of {Wi-Fi} and
  heterogeneous small cell networks sharing unlicensed spectrum,'' \emph{IEEE
  Communications Magazine}, vol.~53, no.~3, pp. 158--164, March 2015.

\bibitem{liu2014small}
F.~Liu, E.~Bala, E.~Erkip, M.~C. Beluri, and R.~Yang, ``Small cell traffic
  balancing over licensed and unlicensed bands,'' \emph{IEEE Transactions on
  Vehicular Technology}, vol.~64, no.~12, December 2015.

\bibitem{GuanINFOCOM16}
Z.~Guan and T.~Melodia, ``{{U-LTE}: Spectrally-Efficient and Fair Coexistence
  Between {LTE} and {Wi-Fi} in Unlicensed Bands},'' in \emph{Proc. of IEEE
  Conference on Computer Communications (INFOCOM)}, San Francisco, CA, April
  2016.

\bibitem{khairy2017hybrid}
S.~Khairy, L.~X. Cai, Y.~Cheng, Z.~Han, and H.~Shan, ``A {Hybrid-LBT MAC} with
  adaptive sleep for {LTE LAA} coexisting with {Wi-Fi} over unlicensed band,''
  in \emph{Proc. IEEE Global Communications Conference (GLOBECOM)}, Singapore,
  December 2017.

\bibitem{bairagi2018multi}
A.~K. Bairagi, N.~H. Tran, and C.~S. Hong, ``A multi-game approach for
  effective co-existence in unlicensed spectrum between {LTE-U} system and
  {Wi-Fi} access point,'' in \emph{Proc. International Conference on
  Information Networking (ICOIN)}, Chiang Mai, Thailand, April 2018, pp.
  380--385.

\bibitem{bennis2013cellular}
M.~Bennis, M.~Simsek, A.~Czylwik, W.~Saad, S.~Valentin, and M.~Debbah, ``When
  cellular meets {WiFi} in wireless small cell networks,'' \emph{IEEE
  Communications Magazine,}, vol.~51, no.~6, pp. 44--50, June 2013.

\bibitem{chen2016echo}
M.~Chen, W.~Saad, and C.~Yin, ``Echo state networks for self-organizing
  resource allocation in {LTE-U} with uplink-downlink decoupling,'' \emph{IEEE
  Transactions on Wireless Communications}, vol.~16, no.~1, pp. 3--16, January
  2017.

\bibitem{gu2015exploiting}
Y.~Gu, Y.~Zhang, L.~X. Cai, M.~Pan, L.~Song, and Z.~Han, ``Exploiting
  student-project allocation matching for spectrum sharing in
  {LTE}-unlicensed,'' in \emph{Proc. IEEE Global Communications Conference},
  San Diego, CA, February 2015.

\bibitem{elsherif2015resource}
A.~R. Elsherif, W.-P. Chen, A.~Ito, and Z.~Ding, ``Resource allocation and
  inter-cell interference management for dual-access small cells,'' \emph{IEEE
  Journal on Selected Areas in Communications}, vol.~33, no.~6, pp. 1082--1096,
  June 2015.

\bibitem{kang2014mobile}
X.~Kang, Y.-K. Chia, S.~Sun, and H.~F. Chong, ``Mobile data offloading through
  a third-party {WiFi} access point: An operator's perspective,'' \emph{IEEE
  Transactions on Wireless Communications}, vol.~13, no.~10, pp. 5340--5351,
  October 2014.

\bibitem{chen2016impact}
C.~Chen, R.~A. Berry, M.~L. Honig, and V.~G. Subramanian, ``The impact of
  unlicensed access on small-cell resource allocation,'' \emph{IEEE
  International Conference on Computer Communications}, April 2016.

\bibitem{challita2017proactive}
U.~Challita, L.~Dong, and W.~Saad, ``Proactive resource management in {LTE-U}
  systems: A deep learning perspective,'' \emph{IEEE Transactions on Wireless
  Communications}, vol.~17, no.~7, pp. 4674 -- 4689, July 2018.

\bibitem{xiao2018optimizing}
Y.~Xiao, M.~Hirzallah, and M.~Krunz, ``Optimizing inter-operator network
  slicing over licensed and unlicensed bands,'' in \emph{Proc. 15th Annual IEEE
  International Conference on Sensing, Communication, and Networking (SECON)},
  Hong Kong, China, June 2018.

\bibitem{garcia2018orla}
A.~Garcia-Saavedra, P.~Patras, V.~Valls, X.~Costa-Perez, and D.~J. Leith,
  ``Orla/olaa: Orthogonal coexistence of laa and wifi in unlicensed spectrum,''
  \emph{arXiv preprint arXiv:1802.01360}, 2018.

\bibitem{gu2017dynamic}
Y.~Gu, C.~Jiang, L.~X. Cai, M.~Pan, L.~Song, Z.~Han \emph{et~al.}, ``Dynamic
  path to stability in {LTE}-unlicensed with user mobility: A matching
  framework,'' \emph{IEEE Transactions on Wireless Communications}, vol.~16,
  no.~7, pp. 4547--4561, May 2017.

\bibitem{mozaffari2016unmanned}
M.~Mozaffari, W.~Saad, M.~Bennis, and M.~Debbah, ``Unmanned aerial vehicle with
  underlaid device-to-device communications: Performance and tradeoffs,''
  \emph{IEEE Transactions on Wireless Communications}, vol.~15, no.~6, pp.
  3949--3963, June 2016.

\bibitem{challita2017network}
U.~Challita and W.~Saad, ``Network formation in the sky: Unmanned aerial
  vehicles for multi-hop wireless backhauling,'' in \emph{IEEE Global
  Communications Conference, Next Generation Networking and the Internet
  Symposium}, Singapore, December 2017.

\bibitem{elec2015fair}
N.~Garg, ``Fair use and innovation in unlicensed wireless spectrum,''
  \emph{Stanford University}, 2015.

\bibitem{shih2018unlicensed}
M.-J. Shih, T.-H. Wu, and H.-Y. Wei, ``Unlicensed {LTE} pricing for tiered
  content delivery and heterogeneous user access,'' \emph{IEEE Transactions on
  Mobile Computing, to appear}.

\bibitem{yu2017auction}
H.~Yu, G.~Iosifidis, J.~Huang, and L.~Tassiulas, ``Auction-based coopetition
  between {LTE} unlicensed and {Wi-Fi},'' \emph{IEEE Journal on Selected Areas
  in Communications}, vol.~35, no.~1, pp. 79--90, January 2017.

\bibitem{zhang2017multi}
H.~Zhang, Y.~Xiao, L.~X. Cai, D.~Niyato, L.~Song, and Z.~Han, ``A multi-leader
  multi-follower stackelberg game for resource management in {LTE}
  unlicensed,'' \emph{IEEE Transactions on Wireless Communications}, vol.~16,
  no.~1, pp. 348--361, January 2017.

\bibitem{borgers2015introduction}
T.~Borgers, R.~Strausz, and D.~Krahmer, \emph{An introduction to the theory of
  mechanism design}.\hskip 1em plus 0.5em minus 0.4em\relax Oxford University
  Press, USA, 2015.

\bibitem{zhang2017survey}
Y.~Zhang, M.~Pan, L.~Song, Z.~Dawy, and Z.~Han, ``A survey of contract
  theory-based incentive mechanism design in wireless networks,'' \emph{IEEE
  Wireless Communications}, vol.~24, no.~3, pp. 80--85, June 2017.

\bibitem{gao2011spectrum}
L.~Gao, X.~Wang, Y.~Xu, and Q.~Zhang, ``Spectrum trading in cognitive radio
  networks: A contract-theoretic modeling approach,'' \emph{IEEE Journal on
  Seleceted Areas in Communications}, vol.~29, no.~4, pp. 843--855, April 2011.

\bibitem{duan2014cooperative}
L.~Duan, L.~Gao, and J.~Huang, ``Cooperative spectrum sharing: a contract-based
  approach,'' \emph{IEEE Transactions on Mobile Computing}, vol.~13, no.~1, pp.
  174--187, January 2014.

\bibitem{zhang2015contract}
Y.~Zhang, L.~Song, W.~Saad, Z.~Dawy, and Z.~Han, ``Contract-based incentive
  mechanisms for device-to-device communications in cellular networks,''
  \emph{IEEE Journal on Selected Areas in Communications}, vol.~33, no.~10, pp.
  2144--2155, October 2015.

\bibitem{hamidouche2016breaking}
K.~Hamidouche, W.~Saad, and M.~Debbah, ``Breaking the economic barrier of
  caching in cellular networks: Incentives and contracts,'' in \emph{Proc. IEEE
  Global Communication Conference}, Washington, DC, December 2016.

\bibitem{shah2017incentive}
H.~Shah-Mansouri, V.~W. Wong, and J.~Huang, ``An incentive framework for mobile
  data offloading market under price competition,'' \emph{IEEE Transactions on
  Mobile Computing}, vol.~16, no.~11, pp. 2983--2999, November 2017.

\bibitem{hajimirsadeghi2017joint}
M.~Hajimirsadeghi, N.~B. Mandayam, and A.~Reznik, ``Joint caching and pricing
  strategies for popular content in information centric networks,'' \emph{IEEE
  Journal on Selected Areas in Communications}, vol.~35, no.~3, pp. 654--667,
  March 2017.

\bibitem{liu2017design}
T.~Liu, J.~Li, F.~Shu, M.~Tao, W.~Chen, and Z.~Han, ``Design of contract-based
  trading mechanism for a small-cell caching system,'' \emph{IEEE Transactions
  on Wireless Communications}, vol.~16, no.~10, pp. 6602--6617, October 2017.

\bibitem{liu2014stable}
Q.~Liu, G.~J. Mailath, A.~Postlewaite, and L.~Samuelson, ``Stable matching with
  incomplete information,'' \emph{Econometrica}, vol.~82, no.~2, pp. 541--587,
  April 2014.

\bibitem{wang2013interdependent}
X.~Y. Wang, ``Interdependent utility and truthtelling in two-sided matching,''
  \emph{Economic Research Initiatives at Duke (ERID) Working Paper}, vol. 159,
  2013.

\bibitem{bodine2011peer}
E.~Bodine-Baron, C.~Lee, A.~Chong, B.~Hassibi, and A.~Wierman, ``Peer effects
  and stability in matching markets,'' in \emph{International Symposium on
  Algorithmic Game Theory}.\hskip 1em plus 0.5em minus 0.4em\relax Springer,
  2011, pp. 117--129.

\bibitem{gu2015matching}
Y.~Gu, W.~Saad, M.~Bennis, M.~Debbah, and Z.~Han, ``Matching theory for future
  wireless networks: fundamentals and applications,'' \emph{IEEE Communications
  Magazine}, vol.~53, no.~5, pp. 52--59, May 2015.

\bibitem{Roth1990}
A.~E. Roth and M.~Sotomayor, \emph{Two-Sided Matching: A Study in
  Game-Theoretic Modeling and Analysis}, ser. Econometric Society
  Monograph.\hskip 1em plus 0.5em minus 0.4em\relax Cambridge University Press,
  1990.

\bibitem{Roth1991}
A.~E. Roth, ``A natural experiment in the organization of entry level labor
  markets: Regional markets for new physicians and surgeons in the u.k,''
  \emph{American Economic Review}, vol.~81, pp. 415--425, June 1991.

\bibitem{semiari2014matching}
O.~Semiari, W.~Saad, S.~Valentin, M.~Bennis, and B.~Maham, ``Matching theory
  for priority-based cell association in the downlink of wireless small cell
  networks,'' in \emph{Proc. IEEE International Conference on Acoustics, Speech
  and Signal Processing (ICASSP)}, Florence, Italy, May 2014, pp. 444--448.

\bibitem{xiao2015bayesiano}
Y.~Xiao, K.-C. Chen, C.~Yuen, Z.~Han, and L.~A. DaSilva, ``A bayesian
  overlapping coalition formation game for device-to-device spectrum sharing in
  cellular networks,'' \emph{IEEE Transactions on Wireless Communications},
  vol.~14, no.~7, pp. 4034--4051, July 2015.

\bibitem{xiao2015bayesian}
Y.~Xiao, Z.~Han, K.-C. Chen, and L.~A. DaSilva, ``Bayesian hierarchical
  mechanism design for cognitive radio networks,'' \emph{IEEE Journal on
  Selected Areas in Communications}, vol.~33, no.~5, pp. 986--1001, May 2015.

\bibitem{roth1992two}
A.~E. Roth and M.~A.~O. Sotomayor, \emph{Two-sided matching: A study in
  game-theoretic modeling and analysis}.\hskip 1em plus 0.5em minus 0.4em\relax
  Cambridge University Press, 1992, no.~18.

\bibitem{pantisano2013matching}
F.~Pantisano, M.~Bennis, W.~Saad, S.~Valentin, and M.~Debbah, ``Matching with
  externalities for context-aware user-cell association in small cell
  networks,'' in \emph{Proc. IEEE Globecom Workshops (GC Wkshps)}, Atlanta, GA,
  December 2013.

\end{thebibliography}

\end{document}